\theoremstyle{plain}
\numberwithin{equation}{section}
\newtheorem{thm}{Theorem}[section]
\newtheorem{lem}[thm]{Lemma}
\newtheorem{cor}[thm]{Corollary}
\newcounter{cond}
\newcommand{\complex}{{\mathbb C}}
\newcommand{\positive}{{\mathbb N}}
\newcommand{\real}{{\mathbb R}}
\newcommand{\ascript}{{\mathcal A}}
\newcommand{\bscript}{{\mathcal B}}
\newcommand{\cscript}{{\mathcal C}}
\newcommand{\mscript}{{\mathcal M}}
\newcommand{\qscript}{{\mathcal Q}}
\newcommand{\sscript}{{\mathcal S}}
\newcommand{\rmre}{\mathrm{Re\,}}
\newcommand{\rmtr}{\mathrm{tr}}
\newcommand{\rmspan}{\mathrm{span}}
\newcommand{\rmnorm}{\mathrm{norm}}
\newcommand{\rmrange}{\mathrm{Range}}
\newcommand{\rmcyl}{\mathrm{cyl}}
\newcommand{\dhat}{\widehat{D}}
\newcommand{\fhat}{\widehat{f}}
\newcommand{\ghat}{\widehat{g}}
\newcommand{\hhat}{\widehat{h}}
\newcommand{\muhat}{\widehat{\mu}}
\newcommand{\chihat}{\widehat{\chi}}
\newcommand{\gammahat}{\widehat{\gamma}}
\newcommand{\mutilde}{\widetilde{\mu}}
\newcommand{\cbar}{\bar{c}}
\newcommand{\fbar}{\bar{f}}
\newcommand{\ab}[1]{\left|#1\right|}
\newcommand{\doubleab}[1]{\left\|#1\right\|}
\newcommand{\brac}[1]{\left\{#1\right\}}
\newcommand{\paren}[1]{\left(#1\right)}
\newcommand{\sqbrac}[1]{\left[#1\right]}
\newcommand{\elbows}[1]{{\left\langle#1\right\rangle}}
\newcommand{\ket}[1]{{\left|#1\right>}}
\newcommand{\bra}[1]{{\left<#1\right|}}
\begin{document}

\title{DISCRETE QUANTUM PROCESSES
}
\author{S. Gudder\\ Department of Mathematics\\
University of Denver\\ Denver, Colorado 80208, U.S.A.\\
sgudder@.du.edu
}
\date{}
\maketitle

\begin{abstract}
A discrete quantum process is defined as a sequence of local states $\rho _t$, $t=0,1,2,\ldots$, satisfying certain conditions on an $L_2$ Hilbert space $H$. If $\rho =\lim\rho _t$ exists, then $\rho$ is called a global state for the system. In important cases, the global state does not exist and we must then work with the local states. In a natural way, the local states generate a sequence of quantum measures which in turn define a single quantum measure
$\mu$ on the algebra of cylinder sets $\cscript$. We consider the problem of extending $\mu$ to other physically relevant sets in a systematic way. To this end we show that $\mu$ can be properly extended to a quantum measure $\mutilde$ on a ``quadratic algebra'' containing
$\cscript$. We also show that a random variable $f$ can be ``quantized'' to form a self-adjoint operator $\fhat$ on $H$. We then employ $\fhat$ to define a quantum integral $\int fd\mutilde$. Various examples are given
\end{abstract}

\section{Introduction}  % Section 1
This section presents an overview of the paper. Detailed definitions will be given in Sections~2, 3 and 4. The main arena for this study is a Hilbert space $H=L_2(\Omega ,\ascript ,\nu )$ where $(\Omega ,\ascript ,\nu )$ is a probability space. We think of $\Omega$ as the set of paths or trajectories or histories of a physical system. It is unusual to consider paths for a quantum system because such systems are not supposed to have well-defined trajectories, so paths are considered to be meaningless. However, paths are the basic ingredients of the histories approach to quantum mechanics \cite{djs10, dgt08, gt09, hal09, mocs05} and they appear in Feynman integrals and quantum gravity studies \cite{djs10, gt09, sor10}. Our attitude is that we are not abandoning the usual quantum formalism, but we are gleaning more information from this formalism by allowing the consideration of paths.

If $\rho$ is a density operator (state) on $H$ we define the \textit{decoherence functional}
$D_\rho\colon\ascript\times\ascript\to\complex$ by
\begin{equation*}
D_\rho (A,B)=\elbows{\rho\chi _B,\chi _A}
\end{equation*}
where $\chi _A$ is the characteristic function for $A\in\ascript$. We then define the \textit{quantum measure}
$\mu _\rho\colon\ascript\to\real ^+$ by $\mu _\rho (A)=D_\rho (A,A)$. If $f\colon\Omega\to\real$ is a random variable with $f\in H$ we define the ``quantization'' of $f$ to be a certain self-adjoint operator $\fhat$ on $H$. The
\textit{quantum integral} of $f$ is defined as
\begin{equation*}
\int fd\mu _\rho =\rmtr (\rho\fhat )
\end{equation*}
Properties of $D_\rho$, $\mu _\rho$ and $\int fd\mu _\rho$ are reviewed in Section~2.

A sequence of states $\rho _t$, $t=0,1,2,\ldots$, on $H$ that have certain properties is called a discrete quantum process and we call $\rho _t$ the local states for the process. If $\rho =\lim\rho _t$ exists, we call $\rho$ the global state for the process. For important cases, the global state does not exist and we must then work with the local states. In a natural way, the local states generate a sequence of quantum measures which in turn define a single quantum measure $\mu$ on the algebra of cylinder sets $\cscript$. It appears to be impossible to extend $\mu$ to a quantum measure on $\ascript$. An important problem is to extend $\mu$ to other physically relevant sets in a systematic way. We say that a set $A\in\ascript$ is \textit{suitable} if $\lim\elbows{\rho _t\chi _A,\chi _A}$ exists and is finite. We denote the collection of suitable sets by $\sscript$ and for $A\in\sscript$ we define
\begin{equation*}
\mutilde (A)=\lim\elbows{\rho _t\chi _A,\chi _A}
\end{equation*}
It is shown in Section~3 that $\sscript$ is a ``quadratic algebra'' that properly contains $\cscript$ and that $\mutilde$ is a quantum measure on $\sscript$ that extends $\mu$. It is also shown that the quantum integral extends in a natural way.

Section~4 considers finite unitary systems. Such a system is a set of unitary operators $U(s,r)$, $r\le s\in\positive$ on a position Hilbert space $\complex ^m$. The operator $U(s,r)$ describes the evolution of a finite-dimensional quantum system in discrete time-steps from time $r$ to time $s$. We call the elements of $S=\brac{0,1,\ldots ,m-1}$
\textit{sites} and we call infinite strings $\gamma =\gamma _0\gamma _1\cdots$, $\gamma _i\in S$
\textit{paths}. The \textit{path space} $\Omega$ is the set of all paths and the $n$-\textit{path space} $\Omega _n$ is the set of all $n$-paths $\gamma =\gamma _0\gamma _1\cdots\gamma _n$. The $n$-\textit{events} are sets in the power set $\ascript _n=2^{\Omega _n}$. Given an initial state $\psi\in\complex ^m$, the operators $U(s,r)$ define a decoherence functional $D_n\colon\ascript _n\times\ascript _n\to\complex$ in a natural way. The
\textit{decoherence matrix} is the $m^{n+1}\times m^{n+1}$ matrix with components
\begin{equation*}
D_n(\gamma ,\gamma ')=D_n\paren{\brac{\gamma},\brac{\gamma '}},\quad\gamma ,\gamma '\in\Omega _n
\end{equation*}
We can think of this matrix as an operator $\dhat _n$ on the $n$-path Hilbert space
$H_n=(\complex ^m)^{\otimes (n+1)}$. It is shown that $\dhat _n$ is a state on $H_n$ and the eigenvalues and eigenvectors of $\dhat _n$ are computed.

Section~5 shows how a finite unitary system can be employed to construct a discrete quantum process. Place the uniform probability distribution on $S$ and form the product measure on $\Omega =S\times S\times\cdots$ to obtain a probability space $(\Omega ,\ascript ,\nu )$. The path Hilbert space becomes $H=L_2(\Omega ,\ascript ,\nu )$. It is shown that the states $\dhat _t$, $t=0,1,2,\ldots$, generate a discrete quantum process $\rho _t$. We demonstrate that the event $A=$``the particle visits the origin'' as well as its complement $A'$ are elements of
$\sscript\smallsetminus\cscript$. An example of a two-site quantum random walk is explored and it is shown that the particle executes a periodic motion with period~4.

Section~6 considers quantum integrals. The operator $\fhat$ can be complicated and the expression $(\rho\fhat )$ can be difficult to evaluate. The eigenvalues and eigenvectors of $\fhat$ are found for a two-valued simple function
$f$. These are then employed to treat arbitrary simple functions. The quantum integral of an arbitrary random variable may then be computed by a limit process.

\section{Quantum Measures and Integrals} % Section 2
In a certain sense a quantum process is a generalization of a classical stochastic process. Moreover, quantum measures and integrals are generalizations of classical probability measures and classical expectations. For these reasons we begin with a short review of classical probability theory and then present a method of ``quantizing'' this structure.

A \textit{probability space} is a triple $(\Omega ,\ascript ,\nu )$ where $\Omega$ is a \textit{sample space} whose elements are \textit{sample points} or \textit{outcomes}, $\ascript$ is a $\sigma$-algebra of subsets of $\Omega$ whose elements are \textit{events} and $\nu$ is a measure on $\ascript$ satisfying $\nu (\Omega )=1$. For
$A\in\ascript$, $\nu (A)$ is interpreted as the probability that event $A$ occurs. We denote the set of measurable functions $f\colon\Omega\to\complex$ by $\mscript (\ascript )$. The first quantization step is to form the Hilbert space
\begin{equation*}
H=L_2(\Omega ,\ascript ,\nu )=\brac{f\in\mscript (\ascript )\colon\int\ab{f}^2d\nu <\infty}
\end{equation*}
with inner product $\elbows{f,g}=\int\fbar gd\nu$ and $\rmnorm\doubleab{f}=\elbows{f,f}^{1/2}$. We call real-valued functions $f\in H$ \textit{random variables}. If $f$ is a random variable, then by Schwarz's inequality we have
$\int\ab{f}d\nu\le\doubleab{f}$ so the expectation $E(f)=\int fd\nu$ exists and is finite. In general probability theory, random variables whose expectations are infinite or do not exist are considered, but for our purposes this more restricted concept is convenient.

The characteristic function $\chi _A$ of $A\in\ascript$ is a random variable with $\doubleab{\chi _A}=\nu (A)^{1/2}$ and we write $\chi _\Omega =1$. For $A,B\in\ascript$ we define the \textit{decoherence operator} $D(A,B)$ as the operator on $H$ defined by $D(A,B)=\ket{\chi _B}\bra{\chi _A}$. Thus, for $f\in H$ we have
\begin{equation*}
D(A,B)f=\elbows{\chi _A,f}\chi _B=\int _Afd\nu\chi _B
\end{equation*}
Of course, if $\nu (A)\nu (B)=0$ then $D(A,B)=0$. If $\nu (A)\nu (B)\ne 0$, it is easy to show that $D(A,B)$ is a rank~1 operator with $\doubleab{D(A,B)}=\nu (A)^{1/2}\nu (B)^{1/2}$. For $A\in\ascript$ we define the
$q$-\textit{measure operator} $\muhat (A)$ on $H$ by $\muhat (A)=D(A,A)$. Hence, for $f\in H$ we have
\begin{equation*}
\muhat (A)f=\elbows{\chi _A,f}\chi _A=\int _Afd\nu\chi _A
\end{equation*}
In particular, $\muhat (\Omega )f=E(f)1$. If $\nu (A)=0$, then $\mu (A)=0$ and if $\nu (A)\ne 0$, then $\muhat (A)$ is a positive (and hence, self-adjoint) rank~1 operator with $\doubleab{\muhat (A)}=\nu (A)$. Moreover, if $\nu (A)\ne 0$, then
\begin{equation*}
\frac{1}{\nu (A)}\muhat (A)=\frac{1}{\nu (A)}\ket{\chi _A}\bra{\chi _A}
\end{equation*}
is an orthogonal projection.

The map $D$ from $\ascript\times\ascript$ into the set of bounded operators $\bscript (H)$ on $H$ has some obvious properties:
\begin{list} {(\arabic{cond})}{\usecounter{cond}
\setlength{\rightmargin}{\leftmargin}}
%(1)
\item If $A\cap B=\emptyset$, then $D(A\cup B,C)=D(A,C)+D(B,C)$ for all $C\in\ascript$ (additivity)
%(2)
\item $D(A,B)^*=D(B,A)$ (conjugate symmetry)
%(3)
\item $D(A,B)^2=\nu (A\cap B)D(A,B)$
%(4)
\item $D(A,B)D(A,B)^*=\nu (A)\muhat (B)$, $D(A,B)^*D(A,B)=\nu (B)\muhat (A)$
\end{list}
Less obvious properties are given in the following theorem proved in \cite{gudapp}.

\begin{thm}       % Theorem 2.1
\label{thm21}
{\rm (a)}\enspace $D\colon\ascript\times\ascript\to\bscript (H)$ is positive semidefinite in the sense that if
$A_i\in\ascript$, $c_i\in\complex$, $i=1,\ldots ,n$, then
\begin{equation*}
\sum _{i,j=1}^nD(A_i,A_j)c_i\cbar _j
\end{equation*}
is a positive operator.
{\rm (b)}\enspace If $A_1\subseteq A_2\subseteq\cdots$ is an increasing sequence in $\ascript$, then the continuity condition
\begin{equation*}
\lim D(A_i,B)=D(\cup A_i,B)
\end{equation*}
holds for every $B\in\ascript$ where the limit is in the operator norm topology.
\end{thm}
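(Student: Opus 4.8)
The plan is to exploit throughout the rank-one structure $D(A,B)=\ket{\chi _B}\bra{\chi _A}$, together with the action $D(A,B)f=\elbows{\chi _A,f}\chi _B$ recorded above; this reduces both parts to elementary computations in $H$ and in the finite measure $\nu$, with no appeal to deeper operator theory.

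For part (a) I would test the operator $T=\sum _{i,j=1}^n D(A_i,A_j)c_i\cbar _j$ against an arbitrary $f\in H$ and verify that $\elbows{f,Tf}\ge 0$. Using the rank-one form, $D(A_i,A_j)f=\elbows{\chi _{A_i},f}\chi _{A_j}$, so that $\elbows{f,D(A_i,A_j)f}=\elbows{\chi _{A_i},f}\elbows{f,\chi _{A_j}}$. Writing $z_k=\elbows{\chi _{A_k},f}$ and using $\elbows{f,\chi _{A_j}}=\bar z_j$, I expect the quadratic form to factor completely, giving $\elbows{f,Tf}=\sum _{i,j}c_i\cbar _j z_i\bar z_j=\ab{\sum _k c_k z_k}^2\ge 0$. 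Since $f$ is arbitrary, $T$ is positive, which is exactly the asserted positive semidefiniteness.

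For part (b) set $A=\cup A_i$ and note that the difference is again rank one, $D(A_i,B)-D(A,B)=\ket{\chi _B}\bra{\chi _{A_i}-\chi _A}$, using $\bra{u}-\bra{v}=\bra{u-v}$. The operator norm of a rank-one map $\ket{u}\bra{v}$ equals $\doubleab{u}\,\doubleab{v}$ (by the Cauchy--Schwarz inequality), so $\doubleab{D(A_i,B)-D(A,B)}=\doubleab{\chi _B}\,\doubleab{\chi _{A_i}-\chi _A}=\nu (B)^{1/2}\,\nu (A\smallsetminus A_i)^{1/2}$. It then remains to drive the right-hand side to $0$: because the $A_i$ increase to $A$, the sets $A\smallsetminus A_i$ decrease to $\emptyset$, and continuity of the finite measure $\nu$ gives $\nu (A\smallsetminus A_i)\to 0$, hence norm convergence.

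The computations in both parts are short, so the only place demanding genuine care is the conjugate-linear bookkeeping in part (a): one must keep $c_i$ paired with $z_i$ and $\cbar _j$ with $\bar z_j$, so that the double sum collapses to a single modulus squared rather than to an indefinite Hermitian expression. Once that pairing is arranged correctly the positivity is immediate, and part (b) is then a routine application of the rank-one norm identity combined with the measure-continuity already guaranteed by the probability space $(\Omega ,\ascript ,\nu )$.
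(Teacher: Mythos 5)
Your proof is correct: part (a) collapses to $\elbows{f,Tf}=\ab{\sum_k c_kz_k}^2\ge 0$ exactly as you compute (equivalently, $T=\ket{v}\bra{v}$ with $v=\sum_k\cbar_k\chi_{A_k}$), and part (b) follows from the rank-one norm identity plus continuity from above of the finite measure $\nu$. Note that the paper itself gives no proof of Theorem~\ref{thm21} — it defers to the reference \cite{gudapp} — so there is nothing in-paper to compare against; your argument is the natural one suggested by the rank-one structure $D(A,B)=\ket{\chi_B}\bra{\chi_A}$ and is complete as written.
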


It follows from (1), (2) and Theorem~\ref{thm21}(b) that $A\mapsto D(A,B)$ and $B\mapsto D(A,B)$ are
operator-valued measures from $\ascript$ to $\bscript (H)$.

The map $\muhat\colon\ascript\to\bscript (H)$ need not be additive. For example, if $A,B\in\ascript$ are disjoint, then
\begin{align*}
\muhat (A\cup B)&=D(A\cup B,A\cup B)=\ket{\chi _{A\cup B}}\bra{\chi _{A\cup B}}
  =\ket{\chi _A+\chi _B}\bra{\chi _A+\chi _B}\\
  &=\ket{\chi _A}\bra{\chi _A}+\ket{\chi _B}\bra{\chi _B}+\ket{\chi _A}\bra{\chi _B}+\ket{\chi _B}\bra{\chi _A}\\
  &=\muhat (A)+\muhat (B)+2\rmre D(A,B)
\end{align*}
Thus, additivity is spoiled by the \textit{interference term} $2\rmre D(A,B)$. Because of this nonadditivity, we have that
$\muhat (A')\ne\muhat (\Omega )-\muhat (A)$ in general, where $A'$ is the complement of $A$. Moreover,
$A\subseteq B$ need not imply $\muhat (A)\le\muhat (B)$ in the usual order of self-adjoint operators. However,
$\muhat$ does satisfy the \textit{grade}-2 \textit{additivity} condition given in the next theorem which is proved in
\cite{gudapp}.

\begin{thm}       % Theorem 2.2
\label{thm22}
{\rm (a)}\enspace $\muhat$ satisfies grade-2 additivity:
\begin{equation*}
\muhat (A\cup B\cup C)=\muhat (A\cup B)+\muhat (A\cup C)+\muhat (B\cup C)-\muhat (A)-\muhat (B)-\muhat (C)
\end{equation*}
whenever $A,B,C\in\ascript$ are mutually disjoint.
{\rm (b)}\enspace $\muhat$ satisfies the continuity conditions
\begin{align*}
\lim\muhat (A_i)&=\muhat (\cup A_i)\\
\lim\muhat (B_i)&=\muhat (\cap A_i)
\end{align*}
in the operator norm topology for any increasing sequence $A_i\in\ascript$ or decreasing sequence
$B_i\in\ascript$.
\end{thm}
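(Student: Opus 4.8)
The plan is to establish (a) by a direct algebraic expansion resting on the bi-additivity of $D$, and (b) by exploiting the rank-one form $\muhat(A)=\ket{\chi_A}\bra{\chi_A}$ together with the $L_2$-convergence of characteristic functions.

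For part (a), I would first note that $D$ is additive in \emph{each} argument separately: property~(1) supplies additivity in the first slot, and combining~(1) with the conjugate symmetry~(2) gives additivity in the second slot as well. Consequently, for mutually disjoint $A,B,C$, expanding $\muhat(A\cup B\cup C)=D(A\cup B\cup C,A\cup B\cup C)$ by this bi-additivity produces the nine terms $\sum_{X,Y\in\{A,B,C\}}D(X,Y)$. I would then expand the three pairwise contributions, e.g.\ $\muhat(A\cup B)=D(A,A)+D(B,B)+D(A,B)+D(B,A)$, and record that $\muhat(A)=D(A,A)$, and similarly for $B,C$. Substituting into the right-hand side and collecting, each diagonal term $D(X,X)$ carries net coefficient $2-1=1$ while each off-diagonal $D(X,Y)$ with $X\ne Y$ occurs exactly once; this reproduces the nine-term expansion of the left-hand side. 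The computation is purely mechanical.

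For part (b), the essential point is that both arguments of $D$ move simultaneously, so Theorem~\ref{thm21}(b), which fixes one argument, does not apply directly; I would instead work with the rank-one structure. Since $\nu$ is a probability measure and $A_i$ increases to $A=\cup A_i$, dominated convergence gives $\doubleab{\chi_{A_i}-\chi_A}^2=\nu(A\smallsetminus A_i)\to 0$. Telescoping $\muhat(A_i)-\muhat(A)$ through the mixed operator $\ket{\chi_{A_i}}\bra{\chi_A}$ yields
\begin{equation*}
\muhat(A_i)-\muhat(A)=\ket{\chi_{A_i}}\bra{\chi_{A_i}-\chi_A}+\ket{\chi_{A_i}-\chi_A}\bra{\chi_A}.
\end{equation*}
Invoking the rank-one norm identity $\doubleab{\ket{u}\bra{v}}=\doubleab{u}\,\doubleab{v}$ together with $\doubleab{\chi_{A_i}},\doubleab{\chi_A}\le 1$, I obtain $\doubleab{\muhat(A_i)-\muhat(A)}\le 2\doubleab{\chi_{A_i}-\chi_A}\to 0$, which is the first continuity statement. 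The decreasing case is handled identically: if $B_i$ decreases to $B=\cap B_i$ then $\doubleab{\chi_{B_i}-\chi_B}^2\to 0$ and the same rank-one estimate gives $\doubleab{\muhat(B_i)-\muhat(B)}\to 0$.

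I do not expect a genuine obstacle. The one point deserving care is recognizing in (b) that, because both entries of $D$ vary at once, one cannot simply quote the one-variable continuity of Theorem~\ref{thm21}(b); the rank-one telescoping estimate is what makes the simultaneous limit transparent.
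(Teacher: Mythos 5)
Your proposal is correct, but note that this paper does not actually prove Theorem~\ref{thm22}; it cites \cite{gudapp} for the proof, so there is no in-paper argument to compare line by line. What can be said is that your route is exactly the one the paper's own machinery suggests, and it is complete. For (a), bi-additivity of $D$ follows as you say from property~(1) together with conjugate symmetry~(2) (or even more directly from $\chi_{A\cup B}=\chi_A+\chi_B$ and the ket-bra form $D(A,B)=\ket{\chi_B}\bra{\chi_A}$), and your coefficient count is right: each diagonal $D(X,X)$ appears in two of the three pairwise unions and is subtracted once, each off-diagonal $D(X,Y)$ appears in exactly one, reproducing the nine-term expansion of $\muhat(A\cup B\cup C)$; the paper's displayed computation of $\muhat(A\cup B)=\muhat(A)+\muhat(B)+2\rmre D(A,B)$ is precisely the two-set instance of this. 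For (b), your caution about Theorem~\ref{thm21}(b) is well placed and is the one genuinely nontrivial observation: a naive two-step telescoping through $D(A_i,A)$ handles only one of the two increments via Theorem~\ref{thm21}(b), since the other increment has its fixed argument varying with $i$, so a uniform bound is needed; your rank-one estimate $\doubleab{\muhat(A_i)-\muhat(A)}\le\doubleab{\chi_{A_i}}\,\doubleab{\chi_{A_i}-\chi_A}+\doubleab{\chi_{A_i}-\chi_A}\,\doubleab{\chi_A}\le 2\doubleab{\chi_{A_i}-\chi_A}$ supplies it, using the norm identity the paper already records in the form $\doubleab{D(A,B)}=\nu(A)^{1/2}\nu(B)^{1/2}$, and $\doubleab{\chi_{A_i}-\chi_A}^2=\nu(A\smallsetminus A_i)\to 0$ holds by continuity of the finite measure $\nu$ (likewise $\nu(B_i\smallsetminus B)\to 0$ in the decreasing case, where finiteness of $\nu$ matters). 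Incidentally, you silently corrected the paper's typo in (b), where $\muhat(\cap A_i)$ should read $\muhat(\cap B_i)$.
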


If $\rho$ is a density operator (or state) on $H$ we define the \textit{decoherence functional}
$D_\rho\colon\ascript\times\ascript\to\complex$ by 
\begin{equation*}
D_\rho (A,B)=\rmtr\sqbrac{\rho D(A,B)}=\elbows{\rho\chi _B,\chi _A}
\end{equation*}
We interpret $D_\rho (A,B)$ as a measure of the interference between $A$ and $B$. The next result follows from Theorem~\ref{thm21}.

\begin{cor}       % Corollary 2.3
\label{cor23}
{\rm (a)}\enspace $A\mapsto D_\rho (A,B)$ is a complex measure on $\ascript$.
{\rm (b)}\enspace If $A_1,\ldots ,A\in\ascript$, then the $n\times n$ matrix $D_\rho (A_i,A_j)$ is positive semidefinite.
\end{cor}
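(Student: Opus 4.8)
The plan is to obtain both statements by composing the operator-level results of Theorem~\ref{thm21} with the linear functional $T\mapsto\rmtr(\rho T)$ on $\bscript(H)$. The only facts about $\rho$ that I expect to need are that it is positive and trace class with $\rmtr(\rho)=1$; the latter yields the operator-norm continuity estimate $\ab{\rmtr(\rho T)}\le\doubleab{T}$, which is the bridge that converts norm convergence of operators into convergence of the scalars $D_\rho$. (Finiteness of each $D_\rho(A,B)$ is automatic from $\doubleab{D(A,B)}=\nu(A)^{1/2}\nu(B)^{1/2}<\infty$.)

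For part (a), fix $B\in\ascript$. Since $D(\emptyset,B)=\ket{\chi_B}\bra{\chi_\emptyset}=0$ we have $D_\rho(\emptyset,B)=0$, and finite additivity of $A\mapsto D_\rho(A,B)$ follows at once from property~(1) and linearity of the trace. To upgrade this to countable additivity, I would take pairwise disjoint $A_1,A_2,\ldots\in\ascript$, set $S_n=\cup_{i=1}^{n}A_i$ and $S=\cup_{i=1}^{\infty}A_i$, and observe that $S_1\subseteq S_2\subseteq\cdots$ increases to $S$. Theorem~\ref{thm21}(b) then gives $D(S_n,B)\to D(S,B)$ in operator norm, and the estimate $\ab{\rmtr(\rho T)}\le\doubleab{T}$ converts this into $D_\rho(S_n,B)\to D_\rho(S,B)$. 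Combined with finite additivity this reads $\sum_{i=1}^{n}D_\rho(A_i,B)=D_\rho(S_n,B)\to D_\rho(S,B)$, which is precisely countable additivity.

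For part (b), I would express the associated quadratic form directly in terms of an operator sum: for $c_1,\ldots,c_n\in\complex$,
\begin{equation*}
\sum_{i,j=1}^{n}\cbar_i\,D_\rho(A_i,A_j)\,c_j=\rmtr\paren{\rho\sum_{i,j=1}^{n}\cbar_ic_j\,D(A_i,A_j)}.
\end{equation*}
Applying Theorem~\ref{thm21}(a) with the coefficients $\cbar_i$ in place of $c_i$, so that the weight on $D(A_i,A_j)$ becomes $\cbar_i\,\overline{\cbar_j}=\cbar_ic_j$, shows that $P:=\sum_{i,j}\cbar_ic_j\,D(A_i,A_j)$ is a positive operator. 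Since $\rho$ is positive, $\rmtr(\rho P)=\rmtr(\rho^{1/2}P\rho^{1/2})\ge0$, so the form is nonnegative and the matrix $\sqbrac{D_\rho(A_i,A_j)}$ is positive semidefinite.

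The computations are routine, as befits a corollary; the two places to be careful are the bookkeeping of complex conjugates when matching the quadratic form in part~(b) against the convention of Theorem~\ref{thm21}(a), and the justification in part~(a) that the trace functional is continuous for the operator-norm limit. The latter is the only genuine input beyond Theorem~\ref{thm21}, and it is exactly where one uses that $\rho$ is trace class rather than merely bounded.
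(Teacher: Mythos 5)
Your proposal is correct and takes essentially the same approach as the paper: the paper gives no explicit argument, saying only that the corollary ``follows from Theorem~\ref{thm21},'' and your derivation---composing that theorem with the functional $T\mapsto\rmtr(\rho T)$, using $\ab{\rmtr (\rho T)}\le\doubleab{T}$ for countable additivity in (a) and $\rmtr (\rho P)=\rmtr (\rho ^{1/2}P\rho ^{1/2})\ge 0$ with the conjugated coefficients in (b)---is exactly the intended filling-in of that assertion. Both delicate points you flag (the conjugate bookkeeping against the convention of Theorem~\ref{thm21}(a), and trace-class continuity under operator-norm limits) are handled correctly.
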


For the density operator $\rho$ on $H$ we define the $q$-measure $\mu _\rho\colon\ascript\to\real ^+$ by
\begin{equation*}
\mu _\rho (A)=\rmtr\sqbrac{\rho\muhat (A)}=\elbows{\rho\chi _A,\chi _A}
\end{equation*}
It can be shown that $\mu _\rho (\Omega )\le 1$ and that Theorem~\ref{thm22} holds with $\muhat$ replaced by
$\mu _\rho$ \cite{gudapp}. We interpret $\mu _\rho (A)$ as the $q$-probability or propensity of the event $A$ in the state $\rho$ \cite{gud10, sor94, sor07, sor11}.

We next introduce the second quantization step. Let $f$ be a nonnegative random variable. The
\textit{quantization} of $f$ is the operator $\fhat$ on $H$ defined by
\begin{equation*}
(\fhat g)(y)=\int\min\sqbrac{f(x),f(y)}g(x)d\nu (x)
\end{equation*}
It easily follows that $\doubleab{\fhat}\le\doubleab{f}$ so $\fhat$ is a bounded self-adjoint operator on $H$.

\begin{lem}       % Lemma 2.4
\label{lem24}
If $f_1,f_2$ are nonnegative random variables with disjoint support, then $\fhat _1\fhat _2=\fhat _2\fhat _1=0$ and
\begin{equation*}
\doubleab{\fhat _1+\fhat _2}=\max\sqbrac{\doubleab{\fhat _1},\doubleab{\fhat _2}}
\end{equation*}
\end{lem}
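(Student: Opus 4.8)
The plan is to exploit the fact that the integral kernel $\min\sqbrac{f_i(x),f_i(y)}$ of $\fhat _i$ is concentrated on the support $S_i=\brac{f_i>0}$ of $f_i$. First I would record two structural facts. Since $f_i\ge 0$, we have $\min\sqbrac{f_i(x),f_i(y)}=0$ whenever $x\notin S_i$ or $y\notin S_i$. Reading this in the $x$-variable shows that $\fhat _ig=\fhat _i(\chi _{S_i}g)$, so $\fhat _i$ only sees the restriction of its argument to $S_i$; reading it in the $y$-variable shows that $\fhat _ig$ is supported in $S_i$. In operator language, if $P_i$ denotes multiplication by $\chi _{S_i}$ (the orthogonal projection onto $L_2(S_i)$), both facts together say $\fhat _i=P_i\fhat _iP_i$.

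For the vanishing of the products, the disjoint-support hypothesis means $\nu (S_1\cap S_2)=0$, so $P_1P_2=0$. Then $\fhat _1\fhat _2g=\fhat _1\paren{\chi _{S_1}(\fhat _2g)}$, and since $\fhat _2g$ is supported in $S_2$ while $\chi _{S_1}$ annihilates everything off $S_1$, the factor $\chi _{S_1}(\fhat _2g)$ vanishes; hence $\fhat _1\fhat _2=0$. Because each $\fhat _i$ is self-adjoint, $\fhat _2\fhat _1=\paren{\fhat _1\fhat _2}^*=0$ as well, which is the first assertion.

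For the norm identity I would split $H=L_2(S_1)\oplus L_2(S_2)\oplus L_2\paren{\Omega\smallsetminus(S_1\cup S_2)}$. The relation $\fhat _i=P_i\fhat _iP_i$ together with $P_1P_2=0$ shows that, with respect to this decomposition, $\fhat _1+\fhat _2$ is block diagonal, with blocks $\fhat _1|_{L_2(S_1)}$, $\fhat _2|_{L_2(S_2)}$ and $0$. The norm of a block-diagonal self-adjoint operator is the supremum of the norms of its blocks, and since $\fhat _i$ vanishes off $L_2(S_i)$ we have $\doubleab{\fhat _i}=\doubleab{\fhat _i|_{L_2(S_i)}}$; therefore $\doubleab{\fhat _1+\fhat _2}=\max\sqbrac{\doubleab{\fhat _1},\doubleab{\fhat _2}}$.

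There is no deep obstacle here: once the localization $\fhat _i=P_i\fhat _iP_i$ is in hand, everything follows. The only real care is bookkeeping at the level of $L_2$-equivalence classes, so that ``support'' and ``disjoint support'' are read up to $\nu$-null sets (giving $f_1f_2=0$ almost everywhere and $P_1P_2=0$) and the elementary kernel computation is justified almost everywhere rather than pointwise. The block-diagonal norm fact is standard but worth stating explicitly, since it carries the second half of the argument; alternatively one can avoid the decomposition by observing that $(\fhat _1+\fhat _2)^n=\fhat _1^{\,n}+\fhat _2^{\,n}$ for every $n\ge 1$ (all cross terms vanish once the products do) and passing to the spectral radius via $\doubleab{\fhat _1+\fhat _2}=\lim\doubleab{(\fhat _1+\fhat _2)^n}^{1/n}$, but the direct block argument is cleaner.
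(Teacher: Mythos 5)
Your proof is correct, and its core coincides with the paper's: both arguments rest on the single fact that the kernel $\min\sqbrac{f_i(x),f_i(y)}$ vanishes as soon as either variable lies outside the support $S_i$ of $f_i$. The paper exploits this by a direct Fubini computation, writing $(\fhat_1\fhat_2g)(y)$ as a double integral whose inner kernel $\min\sqbrac{f_1(z),f_1(y)}\min\sqbrac{f_2(x),f_2(z)}$ vanishes identically in $z$ because $f_1(z)f_2(z)=0$; you reach the same conclusion by first packaging the kernel fact as the localization $\fhat_i=P_i\fhat_iP_i$ and then using $P_1P_2=0$, with $\fhat_2\fhat_1=(\fhat_1\fhat_2)^*=0$ by self-adjointness (the paper gets this for free from the symmetry of its computation). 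Where you genuinely add value is the norm identity: the paper dismisses it with ``the second statement now follows,'' while you supply the missing step explicitly via the decomposition $H=L_2(S_1)\oplus L_2(S_2)\oplus L_2\paren{\Omega\smallsetminus(S_1\cup S_2)}$, on which $\fhat_1+\fhat_2$ is block diagonal, together with the standard fact that the norm of a block-diagonal operator is the supremum of the block norms; that argument is complete and handles both inequalities at once. Your alternative via $(\fhat_1+\fhat_2)^n=\fhat_1^{\,n}+\fhat_2^{\,n}$ and the spectral radius also works, but note it needs one extra word for the lower bound $\doubleab{\fhat_1+\fhat_2}\ge\max\sqbrac{\doubleab{\fhat_1},\doubleab{\fhat_2}}$ --- for instance evaluating $\fhat_1+\fhat_2$ on vectors supported in $S_i$, or using positivity of the even powers $\fhat_i^{\,2k}$ --- since otherwise the estimate $\doubleab{\fhat_1^{\,n}+\fhat_2^{\,n}}\le 2\max\sqbrac{\doubleab{\fhat_1},\doubleab{\fhat_2}}^n$ only yields the upper bound. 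Your insistence on reading supports up to $\nu$-null sets is exactly the right level of care, and matches how the paper's pointwise identity $f_1(z)f_2(z)=0$ must be interpreted.
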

\begin{proof}
For every $g\in H$ we have by Fubini's theorem that
\begin{align*}
(\fhat _1\fhat _2g)(y)&=\fhat _1\sqbrac{\int\min\sqbrac{f_2(x),f_2(z)}g(x)d\nu (x)}(y)\\
  &=\int\min\sqbrac{f_1(z),f_1(y)}\brac{\int\min\sqbrac{f_2(x),f_2(z)}g(x)d\nu (x)}d\nu (z)\\
  &=\int\brac{\int\min\sqbrac{f_1(z),f_1(y)}\min\sqbrac{f_2(x),f_2(z)}d\nu (z)}g(x)d\nu (x)\\
  &=0
\end{align*}
where the last equality follows from $f_1(z)f_2(z)=0$. The second statement now follows.
\end{proof}

If $f$ is an arbitrary random variable, we have that $f=f^+-f^-$ where $f^+(x)=\max\sqbrac{f(x),0}$ and
$f^-(x)=-\min\sqbrac{f(x),0}$. Then $f^+,f^-\ge 0$, $f^+f^-=0$ and we define the bounded self-adjoint operator $\fhat$ by $\fhat=f^{+\wedge}-f^{-\wedge}$. It follows from Lemma~\ref{lem24} that
$\doubleab{\fhat}=\max\sqbrac{\doubleab{f^+},\doubleab{f^-}}$. The next result summarizes some of the important properties of $\fhat$ \cite{gudapp}.

\begin{thm}       % Theorem 2.5
\label{thm25}
{\rm (a)}\enspace For any $A\in\ascript$, $\chihat _A=\ket{\chi _A}\bra{\chi _A}=\muhat (A)$.
{\rm (b)}\enspace For any $\alpha\in\real$, $(\alpha f)^\wedge =\alpha\fhat$.
{\rm (c)}\enspace If $f\ge 0$, then $\fhat$ is a positive operator.
{\rm (d)}\enspace If $0\le f_1\le f_2\le\cdots$ is an increasing sequence of random variables converging in norm to a random variable $f$, then $\lim\fhat _i=\fhat$ in the operator norm topology.
{\rm (e)}\enspace If $f,g,h$ are random variables with disjoint supports, then
\begin{equation*}
(f+g+h)^\wedge =(f+g)^\wedge+(f+h)^\wedge +(g+h)^\wedge -\fhat -\ghat -\hhat
\end{equation*}
\end{thm}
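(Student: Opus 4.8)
The plan is to reduce the whole theorem to a single structural device: a \emph{layer-cake representation} of the quantization in terms of the $q$-measure operators $\muhat$ of super-level sets. For a nonnegative random variable $f$ one has the elementary identity $\min\sqbrac{a,b}=\int _0^\infty\chi _{[0,a)}(t)\chi _{[0,b)}(t)\,dt$ for $a,b\ge 0$, whence $\min\sqbrac{f(x),f(y)}=\int _0^\infty\chi _{E_t}(x)\chi _{E_t}(y)\,dt$ with $E_t=\brac{f>t}$. Substituting into the defining kernel of $\fhat$ and applying Fubini (legitimate since $\int _0^\infty\nu (E_t)\,dt=E(f)<\infty$) I would obtain
\begin{equation*}
\fhat =\int _0^\infty\muhat (E_t)\,dt,\qquad E_t=\brac{f>t},
\end{equation*}
a norm-convergent Bochner integral of the positive operators $\muhat (E_t)=\ket{\chi _{E_t}}\bra{\chi _{E_t}}$; for a general random variable the same computation applied to $f^+$ and $f^-$ gives $\fhat=\int _0^\infty\muhat\paren{\brac{f>t}}\,dt-\int _0^\infty\muhat\paren{\brac{f<-t}}\,dt$. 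Setting up this representation cleanly is the key device on which most of the argument rests.

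Parts (a)--(c) are then short. For (a) I note $\chi _A$ is $\brac{0,1}$-valued, so $\min\sqbrac{\chi _A(x),\chi _A(y)}=\chi _A(x)\chi _A(y)$ and a direct computation gives $\chihat _A g=\elbows{\chi _A,g}\chi _A=\ket{\chi _A}\bra{\chi _A}g=\muhat (A)g$. For (b) I use positive homogeneity of $\min$: when $\alpha\ge 0$ we have $(\alpha f)^+=\alpha f^+$, $(\alpha f)^-=\alpha f^-$, giving $(\alpha f)^\wedge=\alpha\fhat$ at once, while for $\alpha<0$ the parts interchange, $(\alpha f)^+=\ab{\alpha}f^-$ and $(\alpha f)^-=\ab{\alpha}f^+$, so $(\alpha f)^\wedge=\ab{\alpha}(f^-)^\wedge-\ab{\alpha}(f^+)^\wedge=\alpha\fhat$. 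For (c), the representation exhibits $\fhat$ as an integral of the manifestly positive $\muhat (E_t)$; equivalently one computes directly $\elbows{\fhat g,g}=\int _0^\infty\ab{\int _{E_t}g\,d\nu}^2dt\ge 0$.

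For the continuity in (d) I would argue by a Hilbert--Schmidt estimate rather than through the representation. Since $\min$ is $1$-Lipschitz in each slot, the kernel difference satisfies $\ab{\min\sqbrac{f(x),f(y)}-\min\sqbrac{f_i(x),f_i(y)}}\le\ab{f(x)-f_i(x)}+\ab{f(y)-f_i(y)}$, and integrating its square over $\Omega\times\Omega$ (with $\nu (\Omega )=1$) bounds the Hilbert--Schmidt norm of $\fhat-\fhat _i$ by $2\doubleab{f-f_i}$. As the operator norm is dominated by the Hilbert--Schmidt norm and $\doubleab{f-f_i}\to 0$, the claim follows; monotonicity is not even needed beyond guaranteeing $f\ge 0$ and $f_i\in H$.

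The decisive part is (e), obtained by integrating the grade-2 additivity of $\muhat$ from Theorem~\ref{thm22}(a) against the layer-cake. Because $f,g,h$ have disjoint supports, for each $t>0$ the super-level sets $\brac{f>t},\brac{g>t},\brac{h>t}$ are mutually disjoint with $\brac{f+g+h>t}=\brac{f>t}\cup\brac{g>t}\cup\brac{h>t}$, and likewise $\brac{f+g>t}=\brac{f>t}\cup\brac{g>t}$ for the pairwise sums. Applying Theorem~\ref{thm22}(a) to the disjoint triple and integrating over $t\in (0,\infty )$ turns the $\muhat$-identity into the corresponding $\fhat$-identity term by term, giving (e) for nonnegative $f,g,h$. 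For signed random variables I would run the same argument separately on the positive super-level sets $\brac{\cdot>t}$ and the negative sets $\brac{\cdot<-t}$---each a disjoint family by disjointness of supports---and subtract, whereupon the two grade-2 identities recombine into the single identity asserted. The main difficulty here is bookkeeping: checking that the super-level sets really split as disjoint unions so that Theorem~\ref{thm22}(a) applies, and that the signed layer-cake integrals reassemble correctly into $(f+g)^\wedge$, $\fhat$, and the remaining terms.
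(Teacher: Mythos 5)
Your proof is correct, but there is nothing in the paper to compare it with: the paper states Theorem~\ref{thm25} without proof, deferring to \cite{gudapp}. Your route is self-contained and rests on a device the paper never makes explicit, the operator layer-cake representation
\begin{equation*}
\fhat =\int _0^\infty\muhat\paren{\brac{f>\lambda}}\,d\lambda\qquad (f\ge 0),
\end{equation*}
whose justification via Fubini is exactly as you say, since $\int _0^\infty\nu\paren{\brac{f>\lambda}}\,d\lambda =E(f)<\infty$ and $\doubleab{\muhat (A)}=\nu (A)$. This buys you (c) at once, and it reduces (e) to integrating the grade-2 additivity of $\muhat$ (Theorem~\ref{thm22}(a)) over the disjoint super-level sets $\brac{f>\lambda}$, $\brac{g>\lambda}$, $\brac{h>\lambda}$; the signed case recombines correctly because disjoint supports give $(f+g)^{\pm}=f^{\pm}+g^{\pm}$, so your bookkeeping worry is harmless. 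By contrast, the paper's internal tooling (Lemma~\ref{lem24} and the proof sketch of Theorem~\ref{thm61}) works pointwise with the kernel $\min\sqbrac{f(x),f(y)}$, which for (e) would require a case-by-case expansion of $\min$ over supports; your set identity on super-level sets replaces that casework. A pleasant bonus: applying $\rmtr (\rho\,\cdot\,)$ to your representation immediately yields the tail-sum formula, Theorem~\ref{thm27}, so your device is an operator-valued strengthening of that result. Parts (a) and (b) are fine, and your Hilbert--Schmidt bound $\doubleab{\fhat -\fhat _i}\le 2\doubleab{f-f_i}$ proves (d) in stronger form (monotonicity unused). Two points worth tightening: (i) to speak of a norm-convergent Bochner integral you should note strong measurability of $\lambda\mapsto\muhat\paren{\brac{f>\lambda}}$ --- e.g.\ $\lambda\mapsto\chi _{\brac{f>\lambda}}$ is monotone in $H$, hence has at most countably many discontinuities --- or simply work with the weak identity $\elbows{\fhat g,h}=\int _0^\infty\elbows{\muhat\paren{\brac{f>\lambda}}g,h}\,d\lambda$, which suffices for every use you make of the representation; (ii) ``disjoint supports'' need only hold $\nu$-a.e., so the set identities in (e) hold up to $\nu$-null sets, which is harmless because $\muhat (A)$ depends only on the class of $\chi _A$ in $H$.
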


Let $\rho$ be a density operator on $H$ and let $\mu _\rho (A)=\rmtr\paren{\rho\muhat (A)}$ be the corresponding
$q$-measure. If $f$ is a random variable we define the $q$-\textit{integral} (or $q$-\textit{expectation}) of $f$ with respect to $\mu _\rho$ as
\begin{equation*}
\int fd\mu _\rho =\rmtr (\rho\fhat )
\end{equation*}
As usual, for $A\in\ascript$ we define
\begin{equation*}
\int _Afd\mu _\rho =\int\chi _Afd\mu _\rho
\end{equation*}
The next result follows from Theorem~\ref{thm25}.

\begin{cor}       % Corollary 2.6
\label{cor26}
{\rm (a)}\enspace For every $A\in\ascript$, $\int\chi _Ad\mu _\rho=\mu _\rho (A)$.
{\rm (b)}\enspace For every $\alpha\in\real$, $\int\alpha fd\mu _\rho=\alpha\int fd\mu _\rho$.
{\rm (c)}\enspace If $f\ge 0$, then $\int fd\mu _\rho\ge 0$.
{\rm (d)}\enspace If $f_i\ge 0$ is an increasing sequence of random variables converging in norm to a random variable $f$, then $\lim\int f_id\mu _\rho =\int fd\mu _\rho$.
{\rm (e)}\enspace If $f,g,h$ are random variables with disjoint supports, then
\begin{align*}
\int (f+g+h)d\mu _\rho&=\int (f+g)d\mu _\rho +\int (f+h)d\mu _\rho +\int (g+h)d\mu _\rho\\
  &\quad -\int fd\mu _\rho -\int gd\mu _\rho -\int hd\mu _\rho
\end{align*}
\end{cor}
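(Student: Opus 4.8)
The plan is to obtain each part of Corollary~\ref{cor26} as the image of the correspondingly lettered part of Theorem~\ref{thm25} under the scalar functional $T\mapsto\rmtr(\rho T)$ on $\bscript(H)$, using throughout the defining formula $\int fd\mu_\rho=\rmtr(\rho\fhat)$. Since this functional is linear, parts (b) and (e) are immediate: for (b) I would write $\int\alpha fd\mu_\rho=\rmtr\paren{\rho(\alpha f)^\wedge}=\rmtr(\rho\,\alpha\fhat)=\alpha\,\rmtr(\rho\fhat)$ by Theorem~\ref{thm25}(b), and for (e) I would apply $\rmtr(\rho\,\cdot\,)$ to the grade-2 operator identity of Theorem~\ref{thm25}(e) and distribute the trace across the resulting six-term combination. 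Part (a) is a short computation: Theorem~\ref{thm25}(a) gives $\chihat_A=\muhat(A)$, so $\int\chi_Ad\mu_\rho=\rmtr(\rho\chihat_A)=\rmtr\paren{\rho\muhat(A)}=\mu_\rho(A)$, the last step being the definition of the $q$-measure.

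Parts (c) and (d) are where properties of $\rho$ beyond linearity must be invoked. For (c), Theorem~\ref{thm25}(c) gives that $\fhat$ is a positive operator when $f\ge0$; since $\rho$ is a density operator it is positive, and the trace of the product of two positive operators is nonnegative, whence $\int fd\mu_\rho=\rmtr(\rho\fhat)\ge0$. Concretely, one writes $\rmtr(\rho\fhat)=\rmtr\paren{\rho^{1/2}\fhat\rho^{1/2}}$ and observes that $\rho^{1/2}\fhat\rho^{1/2}$ is positive, so its trace is nonnegative. For (d), Theorem~\ref{thm25}(d) gives $\fhat_i\to\fhat$ in operator norm; since $\rho$ is trace class the functional $T\mapsto\rmtr(\rho T)$ is norm-continuous, because $\ab{\rmtr\paren{\rho(\fhat_i-\fhat)}}\le\doubleab{\rho}_1\doubleab{\fhat_i-\fhat}$, and the right-hand side tends to $0$. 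Hence $\int f_id\mu_\rho\to\int fd\mu_\rho$.

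The only genuine content beyond bookkeeping is the verification that $\rmtr(\rho\,\cdot\,)$ is positivity-preserving and norm-continuous, which I expect to be the main (if modest) obstacle; both facts rest on the structural properties of a density operator—positivity and membership in the trace class—rather than on the algebra of the quantization map. Everything else reduces to linearity of the trace applied to the operator identities already established in Theorem~\ref{thm25}.
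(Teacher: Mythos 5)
Your proposal is correct and is precisely the argument the paper intends: the paper offers no written proof beyond the remark that the corollary ``follows from Theorem~\ref{thm25},'' meaning exactly your strategy of applying the linear functional $T\mapsto\rmtr (\rho T)$ to each operator identity, and your supplementary verifications for (c) (positivity via $\rmtr (\rho\fhat\,)=\rmtr\paren{\rho ^{1/2}\fhat\rho ^{1/2}}$) and (d) (the trace-norm bound $\ab{\rmtr\sqbrac{\rho (\fhat _i-\fhat\,)}}\le\doubleab{\rho}_1\doubleab{\fhat _i-\fhat\,}$) are the standard facts the author leaves implicit. Nothing in your write-up deviates from or adds to what the paper's one-line proof presupposes, so there is nothing further to reconcile.
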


The next result is called the \textit{tail-sum formula} and gives a justification for calling $\int fd\mu _\rho$ a
$q$-\textit{integral} \cite{gud09, gudapp}. The classical tail-sum formula is quite useful in traditional probability theory
\cite{gud09}.

\begin{thm}       % Theorem 2.7
\label{thm27}
If $f\ge 0$ is a random variable, then
\begin{equation*}
\int fd\mu _\rho =\int _0^\infty\mu _\rho\brac{x\colon f(x)>\lambda}d\lambda
\end{equation*}
where $d\lambda$ denotes Lebesgue measure on $\real$.
\end{thm}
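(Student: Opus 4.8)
The plan is to reduce the theorem to a single pointwise ``layer-cake'' identity for the kernel of $\fhat$ and then to transport that identity through the trace. Write $A_\lambda=\brac{x\colon f(x)>\lambda}$, so that the integrand on the right-hand side is
\[
\mu _\rho (A_\lambda )=\rmtr\sqbrac{\rho\muhat (A_\lambda )}=\elbows{\rho\chi _{A_\lambda},\chi _{A_\lambda}},
\]
a nonnegative quantity by positivity of $\rho$. The engine of the argument is the observation that for fixed $x,y$ the kernel $\min\sqbrac{f(x),f(y)}$ of $\fhat$ admits the representation
\[
\min\sqbrac{f(x),f(y)}=\int _0^\infty\chi _{A_\lambda}(x)\chi _{A_\lambda}(y)\,d\lambda ,
\]
since $\chi _{A_\lambda}(x)\chi _{A_\lambda}(y)=1$ precisely when $\lambda<\min\sqbrac{f(x),f(y)}$.

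Next I would convert this kernel identity into a weak operator identity. For $g\in H$ I insert the layer-cake representation into
\[
\elbows{g,\fhat g}=\int\!\!\int\overline{g(y)}\min\sqbrac{f(x),f(y)}g(x)\,d\nu (x)\,d\nu (y)
\]
and interchange the $\lambda$-integral with the $\nu\times\nu$ integration. The interchange is legitimate by Fubini's theorem: using $\min\sqbrac{f(x),f(y)}\le\sqbrac{f(x)f(y)}^{1/2}$ one majorizes the modulus of the integrand by $\ab{g(y)}\sqbrac{f(y)}^{1/2}\cdot\ab{g(x)}\sqbrac{f(x)}^{1/2}$, whose double integral equals $\paren{\int\ab{g}f^{1/2}d\nu}^2\le\doubleab{g}^2\int f\,d\nu<\infty$ because $f\in H$ is integrable on the probability space. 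After the interchange the inner integral factors, and using $\muhat (A_\lambda )=\ket{\chi _{A_\lambda}}\bra{\chi _{A_\lambda}}$ from Theorem~\ref{thm25}(a) one obtains
\[
\elbows{g,\fhat g}=\int _0^\infty\ab{\elbows{\chi _{A_\lambda},g}}^2\,d\lambda=\int _0^\infty\elbows{g,\muhat (A_\lambda )g}\,d\lambda .
\]

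Finally I would take the trace against $\rho$. Writing the spectral decomposition $\rho=\sum _np_n\ket{\psi _n}\bra{\psi _n}$ with $p_n\ge 0$, one has $\rmtr(\rho\fhat )=\sum _np_n\elbows{\psi _n,\fhat\psi _n}$, and applying the weak identity to each $\psi _n$ gives
\[
\rmtr(\rho\fhat )=\sum _np_n\int _0^\infty\elbows{\psi _n,\muhat (A_\lambda )\psi _n}\,d\lambda .
\]
Each summand equals $p_n\ab{\elbows{\chi _{A_\lambda},\psi _n}}^2\ge 0$, so Tonelli's theorem permits interchanging the sum with the $\lambda$-integral (and simultaneously furnishes measurability of $\lambda\mapsto\mu _\rho (A_\lambda )$); the sum then collapses to $\rmtr\sqbrac{\rho\muhat (A_\lambda )}=\mu _\rho (A_\lambda )$, which is the claim.

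The main obstacle is the justification of the two interchanges of order. The Tonelli step in the last paragraph is automatic once positivity of $\muhat (A_\lambda )$ is noted, so the delicate point is the Fubini step in the second paragraph: there the factor $\overline{g(y)}g(x)$ is not sign-definite, so one cannot simply invoke Tonelli and must exhibit an honest integrable majorant. The bound $\min\sqbrac{f(x),f(y)}\le\sqbrac{f(x)f(y)}^{1/2}$ together with the integrability of $f$ supplies exactly this, after which the remainder of the proof is a direct computation.
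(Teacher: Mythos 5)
Your proposal is correct, and it is essentially the standard proof: the paper itself does not prove Theorem~\ref{thm27} but defers to \cite{gud09, gudapp}, and the argument given there is precisely your computation --- writing $A_\lambda=\brac{x\colon f(x)>\lambda}$, expanding the kernel of $\fhat$ via $\min\sqbrac{f(x),f(y)}=\int_0^\infty\chi_{A_\lambda}(x)\chi_{A_\lambda}(y)\,d\lambda$, interchanging the integrals, and then summing over the spectral decomposition of $\rho$ to pass from $\elbows{g,\fhat g}=\int_0^\infty\ab{\elbows{\chi_{A_\lambda},g}}^2\,d\lambda$ to $\rmtr (\rho\fhat\,)=\int_0^\infty\mu_\rho (A_\lambda )\,d\lambda$. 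Your handling of the two interchanges (an explicit integrable majorant $\ab{g(x)}\ab{g(y)}\sqbrac{f(x)f(y)}^{1/2}$ for the Fubini step, which is the genuinely delicate point since the integrand is not sign-definite, and positivity plus Tonelli for the sum over the spectral decomposition) is if anything more careful than the cited source.
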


It follows from Theorem~\ref{thm27} that if $f$ is an arbitrary random variable, then
\begin{equation*}
\int fd\mu _\rho =\int _0^\infty\mu _\rho\brac{x\colon f(x)>\lambda}d\lambda
  -\int _0^\infty\mu _\rho\brac{x\colon f(x)<-\lambda}d\lambda
\end{equation*}

\section{Discrete Quantum Processes} % Section 3
Let $(\Omega ,\ascript ,\nu )$ be a probability space and let $\ascript _t\subseteq\ascript$, $t=0,1,2\ldots$, be an increasing sequence of $\sigma$-algebras such that $\ascript$ is the smallest $\sigma$-algebra containing
$\cup\ascript _t$. We then say that $\ascript _t=0,1,2\ldots$, \textit{generates} $\ascript$. Let $\nu _t$ be the restriction of $\nu$ to $\ascript _t$. We think of the probability space $(\Omega ,\ascript _t,\nu _t)$as a classical description of a physical system until a discrete time $t$. A corresponding quantum description takes place in the closed subspace $H_t=L_2(\Omega ,\ascript _t,\nu _t)$ of $H=L_2(\Omega ,\ascript ,\nu )$ $t=0,1,2\ldots\,$. A sequence of density operators $\rho _t$ on $H_t$, $t=0,1,2,\ldots$, is \textit{consistent} if $D_{\rho _{t+1}}(A,B)=D_{\rho _t}(A,B)$ for every $A,B\in\ascript _t$. In particular, we then have that $\mu _{\rho _{t+1}}(A)=\mu _{\rho _t}(A)$ for all $A\in\ascript _t$,
$t=0,1,2,\ldots\,$. We call a consistent sequence $\rho _t$, $t=0,1,2,\ldots$, a \textit{discrete} $q$-\textit{process} and we call $\rho _t$ the \textit{local states} for the process. If each $\ascript _t$ has finite cardinality, we call a consistent sequence $\rho _t$ a \textit{finite} $q$-\textit{process}. Notice for a finite $q$-process that each of the subspaces
$H_t$ is finite-dimensional.

Let $\rho _t$, $t=0,1,2,\ldots$, be a discrete $q$-process. We then have an increasing sequence of closed subspaces
$H_0\subseteq H_1\subseteq H_2\subseteq\cdots\subseteq H$. We now extend $\rho _t$ from $H_t$ to a state on
$H$ by defining $\rho _tf=0$ for all $f\in H_t^\perp$. If $\lim\rho _t$ exists in the strong operator topology, then the limit $\rho$ is a state on $H$. In this case we call $\rho$ the \textit{global state} for the $q$-process $\rho _t$. If the global state
$\rho$ exists, then for every $A,B\in\ascript _t$ we have
\begin{equation}         % equation (3.1)
\label{eq31}
D_\rho (A,B)=\lim _{n\to\infty}D_{\rho _n}(A,B)=D_{\rho _t}(A,B)
\end{equation}
We call $D_\rho$ and $\mu _\rho$ the \textit{global decoherence functional} and \textit{global} $q$-\textit{measure} for the $q$-process, respectively. Equation~\eqref{eq31} shows that $D_\rho$ and $\mu _\rho$ extend all the
$D_{\rho _t}$ and $\mu _{\rho _t}$ from $\ascript _t$ to $\ascript$, $t=0,1,2,\ldots$, respectively. It follows from the work in \cite{djsu10, gsapp} that $\lim\rho _t$ may not exist in which case we would not have a global state and hence no global decoherence functional or global $q$-measure. Moreover, we would not have a global integral
$\int fd\mu _\rho$. In this case we are forced to work with the local states $\rho _t$, $t=0,1,2,\ldots$, and this is what we now explore.

A collection $\qscript$ of subsets of a set $X$ is a \textit{quadratic algebra} if $\emptyset ,X\in\qscript$ and if
$A,B,C\in\qscript$ are mutually disjoint and $A\cup B, A\cup C,B\cup C\in\qscript$, then $A\cup B\cup C\in\qscript$. Of course, an algebra of subsets of $X$ is a quadratic algebra. However, there are examples of quadratic algebras that are not closed under complementation, union or intersection \cite{gsapp}. A general $q$-\textit{measure} is a nonnegative grade-2 set function $\mu$ on a quadratic algebra $\qscript$. That is, $\mu\colon\qscript\to\real ^+$ and if $A,B,C\in\qscript$ are mutually disjoint with $A\cup B,A\cup C,B\cup C\in\qscript$, then
\begin{equation*}
\mu (A\cup B\cup C)=\mu (A\cup B)+\mu (A\cup C)+\mu (B\cup C)-\mu (A)-\mu (B)-\mu (C)
\end{equation*}

Let $\rho _t$, $t=0,1,2,\ldots$, be a discrete $q$-process. Defining $\cscript (\Omega )=\cup\ascript _t$, it is clear that $\cscript (\Omega )$ is an algebra of subsets of $\Omega$. For $A\in\cscript (\Omega )$ we have that $A\in\ascript _t$ for some $t\in\positive$ and we define $\mu (A)=\mu _{\rho _t}(A)$. To show that $\mu$ is well-defined, suppose that $A\in\ascript _t\cap\ascript _{t'}$. We can assume without loss of generality that $\ascript _t\subseteq\ascript _{t'}$. Hence, $\mu _{\rho _{t'}}(A)=\mu _{\rho _t}(A)$ so $\mu$ is well-defined. It easily follows that
$\mu\colon\cscript (\Omega )\to\real ^+$ is a $q$-measure. It appears to be impossible to extend $\mu$ to a
$q$-measure on $\ascript$ in general. In fact, it is shown in \cite{djsu10, gsapp} that in general $\mu$ cannot be extended to a continuous $q$-measure on $\ascript$. However, we can extend $\mu$ to a $q$-measure on a larger quadratic algebra than $\cscript (\Omega )$ and this quadratic algebra contains physically relevant sets that are not in $\cscript (\Omega )$. A set $A\in\ascript$ is \textit{suitable} if $\lim\rmtr (\rho _t\chihat _A)$ exists and is finite. We denote the collection of suitable sets by $\sscript (\Omega )$ and for $A\in\sscript (\Omega )$ we define
$\mutilde (A)=\lim\rmtr (\rho _t\chihat _A)$.

\begin{thm}       % Theorem 3.1
\label{thm31}
$\sscript (\Omega )$ is a quadratic algebra that contains $\cscript (\Omega )$ and $\mutilde$ is a $q$-measure on
$\sscript (\Omega )$ that extends $\mu$.
\end{thm}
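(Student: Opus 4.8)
The plan is to reduce the entire statement to the behavior of the numerical sequence $\mu_{\rho_t}(A)$. First I would observe that by Theorem~\ref{thm25}(a) we have $\chihat_A=\muhat(A)$, so $\rmtr(\rho_t\chihat_A)=\rmtr(\rho_t\muhat(A))=\mu_{\rho_t}(A)$. Thus a set $A\in\ascript$ is suitable precisely when $\lim_t\mu_{\rho_t}(A)$ exists and is finite, in which case $\mutilde(A)=\lim_t\mu_{\rho_t}(A)$. With this reformulation there are four things to verify: $\cscript(\Omega)\subseteq\sscript(\Omega)$, agreement of $\mutilde$ with $\mu$ on $\cscript(\Omega)$, closure of $\sscript(\Omega)$ under the quadratic-algebra operation, and nonnegativity and grade-2 additivity of $\mutilde$.

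Containment and extension I would get directly from consistency. If $A\in\cscript(\Omega)$, then $A\in\ascript_{t_0}$ for some $t_0$, and the consistency relation $\mu_{\rho_{t+1}}(A)=\mu_{\rho_t}(A)$ for $A\in\ascript_t$ gives, by induction, $\mu_{\rho_t}(A)=\mu(A)$ for all $t\ge t_0$. The sequence is then eventually constant, so its limit exists, is finite, and equals $\mu(A)$; hence $A\in\sscript(\Omega)$ and $\mutilde(A)=\mu(A)$. In particular $\emptyset$ and $\Omega$ lie in $\sscript(\Omega)$.

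The core of the proof handles the quadratic-algebra closure and the grade-2 additivity of $\mutilde$ simultaneously. Let $A,B,C\in\sscript(\Omega)$ be mutually disjoint with $A\cup B,A\cup C,B\cup C\in\sscript(\Omega)$. Each $\mu_{\rho_t}$ satisfies grade-2 additivity (Theorem~\ref{thm22} with $\muhat$ replaced by $\mu_{\rho_t}$, applicable since $A,B,C\in\ascript$), so for every $t$
\begin{equation*}
\mu_{\rho_t}(A\cup B\cup C)=\mu_{\rho_t}(A\cup B)+\mu_{\rho_t}(A\cup C)+\mu_{\rho_t}(B\cup C)-\mu_{\rho_t}(A)-\mu_{\rho_t}(B)-\mu_{\rho_t}(C).
\end{equation*}
Every set on the right is suitable by hypothesis, so each of the six terms converges as $t\to\infty$, forcing the left side to converge to a finite limit. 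This shows $A\cup B\cup C\in\sscript(\Omega)$, and passing to the limit yields precisely grade-2 additivity for $\mutilde$. Nonnegativity is immediate, as $\mu_{\rho_t}(A)=\elbows{\rho_t\chi_A,\chi_A}\ge 0$ forces $\mutilde(A)\ge 0$.

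The one conceptual point --- and the only place genuine thought is needed --- is recognizing that the grade-2 identity for $\mu_{\rho_t}$ is the right vehicle: it expresses the measure of the triple union entirely through quantities whose convergence is guaranteed by the quadratic-algebra hypotheses on $A$, $B$, $C$ and their pairwise unions. Once this observation is in hand, both the closure of $\sscript(\Omega)$ and the grade-2 additivity of $\mutilde$ drop out from nothing more than linearity of limits, so I do not expect any residual analytic obstacle.
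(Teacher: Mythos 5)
Your proposal is correct and takes essentially the same approach as the paper: consistency makes $\mu _{\rho _t}(A)$ eventually constant for $A\in\cscript (\Omega )$, and closure of $\sscript (\Omega )$ together with grade-2 additivity of $\mutilde$ follow simultaneously from the grade-2 identity at each finite $t$ plus linearity of limits. The only cosmetic difference is that the paper obtains that identity by applying the operator-level Theorem~\ref{thm25}(e) to $\chi _A,\chi _B,\chi _C$ and tracing against $\rho _t$, whereas you cite the scalar grade-2 additivity of $\mu _{\rho _t}$ directly (via $\chihat _A=\muhat (A)$ from Theorem~\ref{thm25}(a))---one trace apart from the same fact.
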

\begin{proof}
If $A\in\cscript (\Omega )$ then $A\in\ascript _t$ for some $t\in\positive$. Since $\mu _{\rho _{t'}}(A)=\mu _{\rho _t}(A)$ for all $t'\ge t$ we have
\begin{equation*}
\lim\rmtr (\rho _{t'}\chihat _A)=\rmtr (\rho _t\chihat _A)=\mu _{\rho _t}(A)=\mu (A)
\end{equation*}
Hence, $A\in\sscript (\Omega )$ and $\mutilde (A)=\mu (A)$. We conclude that
$\cscript (\Omega )\subseteq\sscript (\Omega )$ and $\mutilde$ extends $\mu$ to $\sscript (\Omega )$. To show that $\sscript (\Omega )$ is a quadratic algebra, suppose $A,B,C\in\sscript (\Omega )$ are mutually disjoint with
$A\cup B,A\cup C,B\cup C\in\sscript (\Omega )$. Applying Theorem~\ref{thm25}(e) we obtain
\begin{align*}
\lim\rmtr (\rho _t\chihat _{A\cup B\cup C})&=\lim\rmtr\sqbrac{\rho _t(\chi _A+\chi _B+\chi _C)^\wedge}\\
  &=\lim\rmtr (\rho _t\chihat _{A\cup B})+\lim\rmtr (\rho _t\chihat_{A\cup C})+\lim\rmtr (\rho _t\chihat _{B\cup C})\\
  &\quad -\lim\rmtr (\rho _t\chihat _A)-\lim\rmtr (\rho _t\chihat _B)-\lim\rmtr (\rho _t\chihat _A)
\end{align*}
We conclude that $A\cup B\cup C\in\sscript (\Omega )$ and that $\mutilde$ is a $q$-measure on $\sscript (\Omega )$.
\end{proof}

A random variable $f\in H$ is \textit{integrable} for $\rho _t$ if $\lim\rmtr (\rho _t\fhat )$ exists and is finite. If $f$ is integrable we define
\begin{equation*}
\int fd\mutilde =\lim\rmtr (\rho _t\fhat )
\end{equation*}
Notice that if $A\in\sscript (\Omega )$ then $\chi _A$ is integrable and
\begin{equation*}
\int\chi _Ad\mutilde =\mutilde (A)
\end{equation*}
The proof of the next result is similar to the proof of Theorem~\ref{thm31} and follows from Corollary~\ref{cor26}.

\begin{thm}       % Theorem 3.2
\label{thm32}
{\rm (a)}\enspace If $f$ is integrable and $\alpha\in\real$, then $\alpha f$ is integrable and
$\int\alpha fd\mutilde =\alpha\int fd\mutilde$
{\rm (b)}\enspace If $f$ is integrable with $f\ge 0$, then $\int fd\mutilde\ge 0$.
{\rm (c)}\enspace If $f,g,h$ are integrable with disjoint support and $f+g$, $f+h$, $g+h$ are integrable, then $f+g+h$ is integrable and
\begin{align*}
\int (f+g+h)d\mutilde&=\int (f+g)d\mutilde +\int (f+h)d\mutilde +\int (g+h)d\mutilde\\
  &\quad -\int fd\mutilde -\int gd\mutilde -\int hd\mutilde
\end{align*}
\end{thm}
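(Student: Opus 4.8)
The plan is to reduce each assertion to the corresponding local statement in Corollary~\ref{cor26} together with the operator identities of Theorem~\ref{thm25}, exactly as the proof of Theorem~\ref{thm31} reduced suitability to Theorem~\ref{thm25}(e). The key observation is that for each fixed $t$ we have $\rmtr (\rho _t\fhat )=\int fd\mu _{\rho _t}$, so that the defining quantity $\int fd\mutilde =\lim\rmtr (\rho _t\fhat )$ is simply the limit of the local $q$-integrals. Every property of $\int fd\mutilde$ will then be obtained by applying the matching property at each $t$ and passing to the limit, using only linearity of the trace and the elementary fact that a limit distributes over a finite sum provided each summand converges.

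For part~(a), I would use Theorem~\ref{thm25}(b) to write $(\alpha f)^\wedge =\alpha\fhat$, whence $\rmtr (\rho _t(\alpha f)^\wedge )=\alpha\,\rmtr (\rho _t\fhat )$ for every $t$; since $f$ is integrable the right-hand side converges, so $\alpha f$ is integrable and $\int\alpha fd\mutilde =\alpha\int fd\mutilde$. For part~(b), Theorem~\ref{thm25}(c) gives that $\fhat$ is a positive operator when $f\ge 0$, so each $\rmtr (\rho _t\fhat )\ge 0$ because $\rho _t$ is a state; the limit of a sequence of nonnegative numbers is nonnegative, giving $\int fd\mutilde\ge 0$.

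The substantive case is part~(c). Here I would invoke the operator identity of Theorem~\ref{thm25}(e),
\begin{equation*}
(f+g+h)^\wedge =(f+g)^\wedge +(f+h)^\wedge +(g+h)^\wedge -\fhat -\ghat -\hhat ,
\end{equation*}
which is available because $f,g,h$ have disjoint supports, and then take $\rmtr (\rho _t\,\cdot\,)$ of both sides. By linearity of the trace this expresses $\rmtr (\rho _t(f+g+h)^\wedge )$ as a fixed linear combination of the six quantities $\rmtr (\rho _t(f+g)^\wedge ),\ldots ,\rmtr (\rho _t\hhat )$, valid for each individual $t$.

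The main point to be careful about will be the passage to the limit in~(c): to conclude that $f+g+h$ is integrable I first need the limit of the left-hand side to exist, and this is precisely where the hypotheses enter. Since $f$, $g$, $h$ are integrable the three limits $\lim\rmtr (\rho _t\fhat )$, $\lim\rmtr (\rho _t\ghat )$, $\lim\rmtr (\rho _t\hhat )$ exist and are finite, and since $f+g$, $f+h$, $g+h$ are integrable the remaining three limits exist as well. As all six summands converge, the limit distributes over the finite sum, so the left-hand side converges; hence $f+g+h$ is integrable, and letting $t\to\infty$ yields the stated grade-2 identity for $\mutilde$. No genuine obstacle remains beyond this bookkeeping, since every analytic ingredient is supplied by Theorem~\ref{thm25} and the local grade-2 additivity of Corollary~\ref{cor26}.
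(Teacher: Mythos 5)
Your proof is correct and takes essentially the same approach as the paper, which merely remarks that the proof is ``similar to the proof of Theorem~\ref{thm31} and follows from Corollary~\ref{cor26}'': applying Theorem~\ref{thm25}(b),(c),(e) at each fixed $t$, taking $\rmtr (\rho _t\,\cdot\,)$, and passing to the limit once the six constituent limits in part~(c) are known to exist is exactly the intended reduction. Nothing further is needed.
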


\section{Finite Unitary Systems} % Section 4
Finite unitary systems and their relationship to finite $q$-processes have been studied in the past
\cite{djsu10, gsapp}. They are closely related to the histories approach to quantum mechanics
\cite{hal09,mocs05,sor07,sor10}. In this section we study finite unitary systems and in Section~5 we employ them to construct finite $q$-processes.

Let $\complex ^m$ be the $m$-dimensional Hilbert space with elements
\begin{equation*}
f\colon\brac{0,1,\ldots ,m-1}\to\complex
\end{equation*}
and inner product
\begin{equation*}
\elbows{f,g}=\sum _{j=0}^{m-1}\overline{f(j)}g(j)
\end{equation*}
We call $\complex ^m$ the \textit{position HIlbert space} and denote the standard basis on $\complex ^m$ by
$e_0,e_1,\ldots ,e_{m-1}$. A \textit{finite unitary system} is a collection of unitary operators $U(s,r)$,
$r\le s\in\positive$ on $\complex ^m$ such that $U(r,r)=I$ and
\begin{equation*}
U(t,r)=U(t,s)U(s,r)
\end{equation*}
for $r\le s\le t\in\positive$. If $U(s,r)$, $r\le s\in\positive$, is a finite unitary system, then we have the unitary operators $U(n+1,n),n\in\positive$ such that
\begin{equation}         % equation (4.1)
\label{eq41}
U(s,r)=U(s,s-1)U(s-1,s-2)\cdots U(r+1,r)
\end{equation}
Conversely, if $U(n+1,n),n\in\positive$, are unitary operators on $\complex ^m$, then defining $U(r,r)=I$ and for 
$r<s$ defining $U(s,r)$ by \eqref{eq41} we have that $U(s,r)$, $r\le s\in\positive$, is a finite unitary system. A finite unitary system $U(s,r)$ $r\le s\in\positive$ is \textit{stationary} if there is a unitary operator $U$ on $\complex ^m$ such that $U(s,r)=U^{s-r}$ for all $r\le s\in\positive$. In this case $U(n+1,n)=U$ for all $n\in\positive$.

We call the elements of $S=\brac{0,1,\ldots ,m-1}$ \textit{sites} or \textit{positions} and we call infinite strings
$\gamma =\gamma _0\gamma _1\gamma _2\cdots$, $\gamma _i\in S$, \textit{paths} or \textit{trajectories} or
\textit{histories}. The \textit{path} or \textit{sample space} is
\begin{equation*}
\Omega =\brac{\gamma\colon\gamma\hbox{ a path}}
\end{equation*}
We also call finite strings $\gamma _0\gamma _1\cdots\gamma _n$ $n$-\textit{paths} and
\begin{equation*}
\Omega _n=\brac{\gamma\colon\gamma\hbox{ an }n\hbox{-path}}
\end{equation*}
is the $n$-\textit{path space} on $n$-\textit{sample space}. Notice that the cardinality $\ab{\Omega _n}=m^{n+1}$. We call the elements of $\ascript _n=2^{\Omega _n}$ $n$-\textit{events}.

A finite unitary system $U(s,r)$ describes the evolution of a finite-dimen\-sional quantum system and the projections
$P(i)=\ket{e_i}\bra{e_i}$, $i=0,1,\ldots ,m-1$, describe the position. The $n$-path $\gamma\in\Omega _n$ is described by the operator $C_n(\gamma )$ on $\complex ^m$ given by
\begin{equation}         % equation (4.2)
\label{eq42}
C_n(\gamma )=P(\gamma _n)U(n,n-1)P(\gamma _{n-1})U(n-1,n-2)\cdots P(\gamma _1)U(1,0)P(\gamma _0)
\end{equation}
Defining $b(\gamma )$ by
\begin{align}         % equation (4.3)
\label{eq43}
b(\gamma )&=\elbows{e_{\gamma _n},U(n,n-1)e_{\gamma _{n-1}}}
  \elbows{e_{\gamma _{n-1}},U(n-1,n-2)e_{\gamma _{n-2}}}\notag\\
  &\quad\cdots\elbows{e_{\gamma _1},U(1,0)e_{\gamma _0}}
\end{align}
Equation~\eqref{eq42} becomes
\begin{equation}         % equation (4.4)
\label{eq44}
C_n(\gamma )=b(\gamma )\ket{e_{\gamma _n}}\bra{e_{\gamma _0}}
\end{equation}

\begin{lem}       % Lemma 4.1
\label{lem41}
For $i=0,1,\ldots ,m-1$ we have
\begin{equation*}
\sum _{\gamma\in\Omega _n}\brac{\ab{b(\gamma )}^2\colon\gamma _0=i}
   =\sum _{\gamma\in\Omega _n}\brac{\ab{b(\gamma )}^2\colon\gamma _n=i}=1
\end{equation*}
\end{lem}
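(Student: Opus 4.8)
The plan is to reduce both identities to the single fact that a unitary matrix has orthonormal rows and orthonormal columns, and then to carry out an iterated sum in which one summation index is eliminated at each stage.

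First I would rewrite the weight. From \eqref{eq43},
\[
\ab{b(\gamma)}^2 = \prod_{k=1}^n \ab{\elbows{e_{\gamma _k}, U(k,k-1) e_{\gamma _{k-1}}}}^2 .
\]
For each $k$ introduce the $m\times m$ matrix $M_k$ with entries $(M_k)_{ji}=\ab{\elbows{e_j,U(k,k-1)e_i}}^2$, so that $\ab{b(\gamma)}^2=\prod_{k=1}^n (M_k)_{\gamma _k\gamma _{k-1}}$. The key observation is that each $M_k$ is doubly stochastic: since $U(k,k-1)$ is unitary, $\sum_j (M_k)_{ji}=\doubleab{U(k,k-1)e_i}^2=1$ for every $i$ (columns sum to one), and since $U(k,k-1)^*$ is also unitary, $\sum_i (M_k)_{ji}=\doubleab{U(k,k-1)^* e_j}^2=1$ for every $j$ (rows sum to one), using $\elbows{e_j,U(k,k-1)e_i}=\elbows{U(k,k-1)^* e_j,e_i}$ and Parseval.

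For the first identity I would fix $\gamma _0=i$ and write
\[
\sum_{\gamma\in\Omega _n}\brac{\ab{b(\gamma)}^2\colon\gamma _0=i}=\sum_{\gamma _1,\ldots,\gamma _n}\prod_{k=1}^n (M_k)_{\gamma _k\gamma _{k-1}} .
\]
Now I perform the summations from the top down. The index $\gamma _n$ occurs only in the factor $(M_n)_{\gamma _n\gamma _{n-1}}$, and summing it out gives $\sum_{\gamma _n}(M_n)_{\gamma _n\gamma _{n-1}}=1$ by the column-sum property, which removes the last factor. Repeating with $\gamma _{n-1},\ldots,\gamma _1$ in turn, each summation collapses one more factor to $1$, so after $n$ steps the whole expression equals $1$. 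To make this peeling rigorous I would phrase it as an induction on $n$. The second identity is symmetric: fix $\gamma _n=i$ and sum from the bottom up, so that $\gamma _0$ occurs only in $(M_1)_{\gamma _1\gamma _0}$ and $\sum_{\gamma _0}(M_1)_{\gamma _1\gamma _0}=1$ by the row-sum property; eliminating $\gamma _0,\gamma _1,\ldots,\gamma _{n-1}$ successively again leaves $1$.

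There is no serious obstacle here; the only thing to be careful about is the bookkeeping, namely which variable lies in which factor and whether a row sum or a column sum is needed at each stage. As an alternative that packages the same telescoping operator-theoretically, one could use \eqref{eq44} to compute $C_n(\gamma)^*C_n(\gamma)=\ab{b(\gamma)}^2P(\gamma _0)$ and hence $\sum_{\gamma}C_n(\gamma)^*C_n(\gamma)=\sum_{\gamma _0}\paren{\sum_{\gamma _1,\ldots,\gamma _n}\ab{b(\gamma)}^2}P(\gamma _0)$, then show directly, by inserting $\sum_{\gamma _k}P(\gamma _k)=I$ and using unitarity, that this sum equals $I$; comparing the coefficients of the mutually orthogonal rank-one projections $P(i)$ yields the first identity, and $\sum_{\gamma}C_n(\gamma)C_n(\gamma)^*=I$ yields the second.
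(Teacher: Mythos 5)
Your proof is correct and follows essentially the same route as the paper: the paper's proof simply factors $\ab{b(\gamma)}^2$ into the product $\prod_{k=1}^n\ab{\elbows{e_{\gamma_k},U(k,k-1)e_{\gamma_{k-1}}}}^2$ and says the result follows by ``calculating the designated sums,'' which is exactly the telescoping summation you carry out. Your contribution is to make explicit what the paper leaves implicit, namely that the matrices of squared moduli are doubly stochastic by unitarity, so each index can be summed out in turn.
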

\begin{proof}
The result follows from
\begin{align*}
\ab{b(\gamma )}^2&=\ab{\elbows{e_{\gamma _n},U(n,n-1)e_{\gamma _{n-1}}}}^2
  \ab{\elbows{e_{\gamma _{n-1}},U(n-1,n-2)e_{\gamma _{n-2}}}}^2\\
  &\quad\cdots\ab{\elbows{e_{\gamma _1},U(1,0)e_{\gamma _0}}}^2
\end{align*}
and calculating the designated sums.
\end{proof}
 If $\psi\in\complex ^m$, $\doubleab{\psi}=1$, we define the \textit{amplitude} of $\gamma\in\Omega _n$ by
 $a _\psi (\gamma )=b(\gamma )\psi (\gamma _0)$. We interpret $\ab{a_\psi (\gamma )}^2$ as the probability of the path $\gamma$ with initial distribution $\psi$. The next result shows that these probabilities sum to 1.

\begin{cor}       % Corollary 4.2
\label{cor42}
For the path space $\Omega _n$ we have
\begin{equation*}
\sum _{\gamma\in\Omega _n}\ab{a_\psi (\gamma )}^2=1
\end{equation*}
\end{cor}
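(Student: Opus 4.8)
The plan is to reduce the claim to Lemma~\ref{lem41} by conditioning the sum on the initial site. First I would expand the summand using the definition $a_\psi (\gamma )=b(\gamma )\psi (\gamma _0)$, so that $\ab{a_\psi (\gamma )}^2=\ab{b(\gamma )}^2\ab{\psi (\gamma _0)}^2$. Since every $n$-path $\gamma =\gamma _0\gamma _1\cdots\gamma _n$ is determined by its starting site $\gamma _0$ together with the remaining coordinates, I would partition $\Omega _n$ according to the value $\gamma _0=i$, $i=0,1,\ldots ,m-1$, and rewrite the full sum as an outer sum over $i$ of an inner sum over all paths beginning at $i$.

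Next, for each fixed $i$ the factor $\ab{\psi (i)}^2$ is constant over all paths with $\gamma _0=i$, so it can be pulled outside the inner sum, giving
\[
\sum _{\gamma\in\Omega _n}\ab{a_\psi (\gamma )}^2
  =\sum _{i=0}^{m-1}\ab{\psi (i)}^2\sum _{\gamma\in\Omega _n}\brac{\ab{b(\gamma )}^2\colon\gamma _0=i}.
\]
The first equality of Lemma~\ref{lem41} asserts that each inner sum equals $1$, so the right-hand side collapses to $\sum _{i=0}^{m-1}\ab{\psi (i)}^2$. Finally, this last expression is exactly $\doubleab{\psi}^2$ by the definition of the inner product on $\complex ^m$, and the normalization hypothesis $\doubleab{\psi}=1$ yields the value $1$.

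There is essentially no obstacle here: the entire content is carried by Lemma~\ref{lem41}, and the only things to verify are the bookkeeping of the partition by initial site and the observation that $\ab{\psi (\gamma _0)}^2$ depends on $\gamma$ only through its first coordinate. The sole technical point worth a remark is the legitimacy of interchanging the order of summation, but since $\Omega _n$ is finite (indeed $\ab{\Omega _n}=m^{n+1}$) this rearrangement is automatic and requires no convergence argument.
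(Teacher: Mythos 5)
Your proposal is correct and matches the paper's own proof essentially step for step: both expand $\ab{a_\psi(\gamma)}^2=\ab{b(\gamma)}^2\ab{\psi(\gamma_0)}^2$, sum first over the initial site $\gamma_0$, pull out $\ab{\psi(\gamma_0)}^2$, invoke the first equality of Lemma~\ref{lem41} to collapse the inner sum to $1$, and conclude from $\doubleab{\psi}=1$. No differences worth noting.
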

\begin{proof}
By Lemma~\ref{lem41} we have
\begin{align*}
\sum _{\gamma\in\Omega _n}\ab{a_\psi (\gamma )}^2
  &=\sum _{\gamma _0}\sum _{\gamma _n,\ldots ,\gamma _1}\ab{a_\psi (\gamma )}^2
  =\sum _{\gamma _0}\ab{\psi (\gamma _0)}^2\sum _{\gamma _n,\ldots ,\gamma _1}\ab{b(\gamma )}^2\\
  &=\sum _{\gamma _0}\ab{\psi (\gamma _0)}^2=1\qedhere
\end{align*}
\end{proof}

It is interesting to note that
\begin{align*}
a_\psi&(\gamma )\\
  &=\elbows{e_{\gamma _n}\otimes\cdots\otimes e_{\gamma _0},U(n,n-1)\otimes\cdots\otimes
  U(1,0)\otimes Ie_{\gamma _{n-1}}\otimes\cdots\otimes e_{\gamma _0}\otimes\psi}
\end{align*}
and when the system is stationary with evolution operator $U$ we have
\begin{equation*}
a_\psi (\gamma )=\elbows{e_{\gamma _n}\otimes\cdots\otimes e_{\gamma _0},U^{\otimes n}
  \otimes Ie_{\gamma _{n-1}}\otimes\cdots\otimes e_{\gamma _0}\otimes\psi}
\end{equation*}

The operator $C_n(\gamma ')^*C_n(\gamma )$ describes the interference between the two paths
$\gamma ,\gamma '\in\Omega _n$. More precisely, $C_n(\gamma ')^*C_n(\gamma )$ describes the interference between two particles, one moving along path $\gamma$ and the other along path $\gamma '$. Applying
\eqref{eq44} we see that
\begin{equation}         % equation (4.5)
\label{eq45}
C_n(\gamma ')^*C_n(\gamma )=\overline{b(\gamma ')}b(\gamma )\ket{e _{\gamma '_0}}\bra{e_{\gamma '_0}}
  \delta _{\gamma _n,\gamma '_n}
\end{equation}
For $A\in\ascript _n$ the \textit{class operator} $C_n(A)$ is
\begin{equation*}
C_n(A)=\sum _{\gamma\in A}C_n(\gamma )
\end{equation*}
It is clear that $A\mapsto C_n(A)$ is an operator-valued measure on $\ascript _n$ satisfying
$C_n(\Omega _n)=U(n,0)$. Indeed, by \eqref{eq43} and \eqref{eq44} we have
\begin{equation*}
C_n(\Omega _n)=\sum _{\gamma\in\Omega _n}C_n(\gamma )=\sum _{\gamma\in\Omega _n}
  \elbows{e_{\gamma _n},U(n,0)e_{\gamma _0}}\ket{e_{\gamma _n}}\bra{e_{\gamma _0}}=U(n,0)
\end{equation*}
The \textit{decoherence functional} $D_n\colon\ascript _n\times\ascript _n\to\complex$ is defined by
\begin{equation*}
D_n(A,B)=\elbows{C_n(A)^*C_n(B)\psi ,\psi}
\end{equation*}
where $\psi\in\complex ^m$, $\doubleab{\psi}=1$, is the initial state. It is clear that $A\mapsto D_n(A,B)$ is a
complex-valued measure on $\ascript _n$ with $D_n(\Omega _n,\Omega _n)=1$. It is well known that for any
$A_1,\ldots ,A_k\in\ascript _n$, $D_n(A_i,A_j)$ is a positive semidefinite matrix \cite{gt09, sor94, sor07}.

The $q$-measure $\mu _n\colon\ascript _n\to\real ^+$ is defined by $\mu _n(A)=D_n(A,A)$. It is well known that
$\mu _n$ indeed satisfies the grade-2 additivity condition required for a $q$-measure \cite{gt09, sor94, sor07}.

The $n$-\textit{distribution} given by
\begin{equation*}
p_n(i)=\mu _n\paren{\brac{\gamma\in\Omega _n\colon\gamma _n=i}}
\end{equation*}
is interpreted as the probability that the system is at site $i$ at time $n$. The next result shows that $p_n(i)$ gives the usual quantum distribution.

\begin{thm}       % Theorem 4.3
\label{thm43}
For $i=0,1,\ldots ,m$ we have
\begin{equation*}
p_n(i)=\ab{\sum _{\gamma _n=i}a_\psi (\gamma )}^2=\ab{\elbows{e_i,U(n,0)\psi}}^2
\end{equation*}
\end{thm}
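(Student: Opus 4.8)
The plan is to unwind the definitions back to the class operators and the amplitude formula \eqref{eq44}, reducing the claim to two short computations. Write $A=\brac{\gamma\in\Omega_n\colon\gamma_n=i}$ for the event in question. By definition $p_n(i)=\mu_n(A)=D_n(A,A)=\elbows{C_n(A)^*C_n(A)\psi,\psi}$, and by the adjoint relation this equals $\elbows{C_n(A)\psi,C_n(A)\psi}=\doubleab{C_n(A)\psi}^2$. So the first task is merely to recognize $p_n(i)$ as a squared norm.

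First I would compute $C_n(A)\psi$ explicitly. Using $C_n(A)=\sum_{\gamma\in A}C_n(\gamma)$ together with \eqref{eq44}, and observing that every $\gamma\in A$ has $\gamma_n=i$ so that $e_{\gamma_n}=e_i$, one obtains $C_n(A)=\sum_{\gamma_n=i}b(\gamma)\ket{e_i}\bra{e_{\gamma_0}}$. Applying this to $\psi$ and using $\elbows{e_{\gamma_0},\psi}=\psi(\gamma_0)$ gives $C_n(A)\psi=\paren{\sum_{\gamma_n=i}b(\gamma)\psi(\gamma_0)}e_i=\paren{\sum_{\gamma_n=i}a_\psi(\gamma)}e_i$, where I invoke the definition $a_\psi(\gamma)=b(\gamma)\psi(\gamma_0)$. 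Since $\doubleab{e_i}=1$, taking the squared norm yields the first claimed equality.

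For the second equality I would exploit the already-established identity $C_n(\Omega_n)=U(n,0)$. Applying \eqref{eq44} to all of $\Omega_n$ gives $U(n,0)\psi=\sum_{\gamma\in\Omega_n}a_\psi(\gamma)e_{\gamma_n}$, and pairing against $e_i$ collapses the sum through $\elbows{e_i,e_{\gamma_n}}=\delta_{i,\gamma_n}$ to $\elbows{e_i,U(n,0)\psi}=\sum_{\gamma_n=i}a_\psi(\gamma)$. Substituting this back into the squared norm from the previous step finishes the argument.

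There is no genuine obstacle here; the proof is bookkeeping. The only points requiring care are the inner-product convention—which must be applied consistently so that $\elbows{e_{\gamma_0},\psi}=\psi(\gamma_0)$ and the amplitudes appear unconjugated—and keeping the index restriction $\gamma_n=i$ attached to each summation throughout.
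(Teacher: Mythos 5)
Your proof is correct, and it differs from the paper's in two identifiable ways. For the first equality, the paper never passes to the vector $C_n(A)\psi$: it expands $D_n(A,A)$ as a double sum $\sum\overline{b(\gamma)}b(\gamma')\overline{\psi(\gamma_0)}\psi(\gamma'_0)$ over pairs with $\gamma_n=\gamma'_n=i$ using the interference formula \eqref{eq45}, and then factors that double sum into a perfect square. You instead observe at the operator level that $D_n(A,A)=\doubleab{C_n(A)\psi}^2$ and that $C_n(A)\psi$ is the scalar $\sum_{\gamma_n=i}a_\psi(\gamma)$ times $e_i$ by \eqref{eq44}; this makes the factorization automatic and hides the double sum entirely, though the paper's expansion has the expository virtue of showing exactly why the $\delta_{\gamma_n,\gamma'_n}$ in \eqref{eq45} forces the sum to collapse into a square (all paths in $A$ end at the same site, so every cross term survives coherently). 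For the second equality, the paper recomputes the marginal sum from \eqref{eq43}, writing $\sum_{\gamma_n=i}b(\gamma)\psi(\gamma_0)=\sum_{\gamma_0}\elbows{e_i,U(n,0)e_{\gamma_0}}\psi(\gamma_0)=\elbows{e_i,U(n,0)\psi}$, i.e.\ it redoes the telescoping of matrix-element products over the intermediate sites; you shortcut this by citing the identity $C_n(\Omega_n)=U(n,0)$ already proved in Section~4, so that $U(n,0)\psi=\sum_\gamma a_\psi(\gamma)e_{\gamma_n}$ and pairing with $e_i$ finishes it. The two mechanisms are equivalent (that identity was itself proved by the same telescoping), but your route is shorter and reuses established structure, while the paper's is self-contained at the level of components. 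Your closing remarks about the inner-product convention are exactly the right points of care: with $\elbows{f,g}=\sum\overline{f(j)}g(j)$ one indeed has $\bra{e_{\gamma_0}}\psi=\psi(\gamma_0)$ unconjugated, as you use.
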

\begin{proof}
Letting $A=\brac{\gamma\in\Omega _n\colon\gamma _n=i}$ we have by \eqref{eq45} that
\begin{align*}
p_n(i)&=D_n(A,A)=\elbows{C_n(A)^*C_n(A)\psi ,\psi}\\
  &=\sum\brac{\elbows{C(\gamma ')^*C(\gamma )\psi ,\psi}\colon\gamma '_n=\gamma _n=i}\\
  &=\sum\brac{\overline{b(\gamma )}b(\gamma ')\overline{\psi (\gamma _0)}\psi (\gamma '_0)\colon\gamma '_n
  =\gamma _n=i}\\
  &=\ab{\sum _{\gamma _n=i}b(\gamma )\psi (\gamma _0)}^2=\ab{\sum _{\gamma _n=i}a_\psi (\gamma )}^2
\end{align*}
Applying \eqref{eq43} gives
\begin{align*}
\sum _{\gamma _n=i}b(\gamma )\psi (\gamma _0)
  &=\sum _{\gamma _0}\elbows{e_i,U(n,0)e_{\gamma _0}}\psi (\gamma _0)\\
  &=\sum _{\gamma _0}\elbows{U(n,0)^*e_i,e_{\gamma _0}}\elbows{e_{\gamma _0},\psi}\\
  &=\elbows{e_i,U(n,0)\psi}
\end{align*}
The result now follows.
\end{proof}

Corresponding to an initial state $\psi\in\complex ^m$, $\doubleab{\psi}=1$, the $n$-\textit{decoherence matrix} is
$D_n(\gamma ,\gamma ')=D_n\paren{\brac{\gamma},\brac{\gamma '}}$. We have by \eqref{eq45} that
\begin{align}         % equation (4.6)
\label{eq46}
D_n(\gamma ,\gamma ')&=\elbows{C_n(\gamma ')^*C_n(\gamma )\psi ,\psi}
  =b(\gamma )\overline{b(\gamma ')}\psi (\gamma _0)\overline{\psi (\gamma '_0)}\delta _{\gamma _n,\gamma '_n}
  \notag\\
  &=a_\psi (\gamma )\overline{a_\psi (\gamma ')}\delta _{\gamma _n,\gamma '_n}
\end{align}
Notice that
\begin{equation*}
\mu _n(\gamma )=D_n(\gamma ,\gamma )=\ab{a_\psi (\gamma )}^2
\end{equation*}
and by Corollary~\ref{cor42},
\begin{equation*}
\sum _{\gamma\in\Omega _n}\mu _n(\gamma )=1
\end{equation*}
We also have
\begin{equation*}
\mu _n\paren{\brac{\gamma ,\gamma '}}=\mu _n(\gamma )+\mu _n(\gamma ')+2\rmre D_n(\gamma ,\gamma ')
\end{equation*}
Finally, notice that
\begin{align*}
D_n(A,B)&=\sum\brac{D_n(\gamma ,\gamma ')\colon\gamma\in A,\gamma '\in B}\\
  &=\sum \brac{a_\psi (\gamma )\overline{a_\psi (\gamma ')}\delta _{\gamma _n,\gamma '_n}
  \colon\gamma\in A,\gamma '\in B}
\end{align*}
and hence,
\begin{equation}         % equation (4.7)
\label{eq47}
\mu _n(A)=D_n(A,A)=\sum _{\gamma ,\gamma '\in A}D_n(\gamma ,\gamma ')
  =\sum _{\gamma ,\gamma '\in A}a_\psi (\gamma )
  \overline{a_\psi (\gamma ')}\delta _{\gamma _n,\gamma '_n}
\end{equation}
 Define the $n$-\textit{path Hilbert space} $H_n=(\complex ^m)^{\otimes (n+1)}$. We associate
 $\gamma\in\Omega _n$ with the unit vector in $H_n$ given by
 \begin{equation*}
e_{\gamma _n}\otimes e_{\gamma _{n-1}}\otimes\cdots\otimes e_{\gamma _0}
\end{equation*}
We can think of $H_n$ as the set $\brac{\phi\colon\Omega _n\to\complex}$ with the usual inner product. Then
$\gamma\in\Omega _n$ corresponds to $\chi _{\brac{\gamma}}$ and the matrix with components
$D_n(\gamma ,\gamma ')$ corresponds to the operator
\begin{equation*}
\paren{\dhat _n\phi}(\gamma )=\sum _{\gamma '}D_n(\gamma ,\gamma ')\phi (\gamma ')
\end{equation*}

\begin{thm}       % Theorem 4.4
\label{thm44}
The operator $\dhat _n$ is a state on $H_n$.
\end{thm}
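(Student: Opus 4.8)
The plan is to verify the three defining properties of a density operator (state) on the finite-dimensional space $H_n$: self-adjointness, positive semidefiniteness, and unit trace. All three will follow from the explicit form of the decoherence matrix in \eqref{eq46}, namely $D_n(\gamma,\gamma')=a_\psi(\gamma)\overline{a_\psi(\gamma')}\delta_{\gamma_n,\gamma'_n}$, together with Corollary~\ref{cor42}.

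First I would establish self-adjointness and unit trace, both of which are one-line computations. For self-adjointness, observe from \eqref{eq46} that $\overline{D_n(\gamma',\gamma)}=\overline{a_\psi(\gamma')}\,a_\psi(\gamma)\,\delta_{\gamma'_n,\gamma_n}=a_\psi(\gamma)\overline{a_\psi(\gamma')}\delta_{\gamma_n,\gamma'_n}=D_n(\gamma,\gamma')$, so the matrix $\paren{D_n(\gamma,\gamma')}$ is Hermitian, which is exactly the statement that $\dhat_n$ is self-adjoint on $H_n$. For the trace, $\rmtr(\dhat_n)=\sum_{\gamma\in\Omega_n}D_n(\gamma,\gamma)=\sum_{\gamma\in\Omega_n}\ab{a_\psi(\gamma)}^2=1$ by Corollary~\ref{cor42}.

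The crux is positive semidefiniteness, and the cleanest route exploits the factor $\delta_{\gamma_n,\gamma'_n}$ in \eqref{eq46}, which makes the matrix block-diagonal once the $n$-paths are grouped by their terminal site. For each fixed terminal site $i\in S$, I would introduce the vector $v_i\in H_n$ with $v_i(\gamma)=a_\psi(\gamma)$ when $\gamma_n=i$ and $v_i(\gamma)=0$ otherwise. On the block of paths ending at $i$ the matrix acts as the rank-one positive operator $\ket{v_i}\bra{v_i}$, and since the off-diagonal blocks vanish we obtain $\dhat_n=\sum_{i\in S}\ket{v_i}\bra{v_i}$. A finite sum of positive operators is positive, so $\dhat_n\ge 0$. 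Alternatively this is immediate from the already-noted fact that the matrix $\paren{D_n(A_i,A_j)}$ is positive semidefinite for any $A_1,\dots,A_k\in\ascript_n$, applied to the singletons $A_i=\brac{\gamma_i}$.

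Having shown that $\dhat_n$ is a positive, self-adjoint, trace-one operator on the finite-dimensional space $H_n$, I conclude that it is a state. I expect the only step that requires a moment's thought is the positivity, since self-adjointness and the trace are immediate; and even positivity reduces to recognizing the block structure induced by $\delta_{\gamma_n,\gamma'_n}$ and writing $\dhat_n$ as a sum of rank-one projections (up to normalization), so I anticipate no genuine obstacle.
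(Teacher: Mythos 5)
Your proof is correct, and it follows the same two-part skeleton as the paper's---positivity plus unit trace---with the trace computation $\rmtr (\dhat _n)=\sum _{\gamma\in\Omega _n}\ab{a_\psi (\gamma )}^2=1$ via Corollary~\ref{cor42} being identical. The genuine difference is in the positivity step: the paper simply asserts that positivity ``follows from \eqref{eq46}'' and cites \cite{gsapp}, whereas you supply a self-contained argument by exhibiting the explicit decomposition
\begin{equation*}
\dhat _n=\sum _{i\in S}\ket{v_i}\bra{v_i},\qquad v_i(\gamma )=a_\psi (\gamma )\delta _{\gamma _n,i}
\end{equation*}
which is verified by the identity $\sum _i\delta _{\gamma _n,i}\delta _{\gamma '_n,i}=\delta _{\gamma _n,\gamma '_n}$ applied to \eqref{eq46}. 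This buys you two things. First, the proof no longer leans on an external reference; your fallback argument---positive semidefiniteness of the matrix $\paren{D_n(A_i,A_j)}$ applied to the singletons $\brac{\gamma}$---is essentially the content of what the paper outsources. Second, your decomposition anticipates Theorem~\ref{thm46}: when $v_i\ne 0$ one has $\ket{v_i}\bra{v_i}=\doubleab{v_i}^2P_{v_i/\doubleab{v_i}}$ with $\doubleab{v_i}^2=\sum\brac{\mu _n(\gamma )\colon\gamma _n=i}=\lambda _i$, so you have in effect written down the spectral resolution of $\dhat _n$ in advance; indeed your $v_i$ agrees with the paper's eigenvector $u^i$ up to the scalar $\overline{a_\psi (\gamma ^i)}$, and your normalization-free version even sidesteps the degenerate case where that scalar vanishes. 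Your separate check of self-adjointness is harmless but redundant, since on a complex Hilbert space positivity already implies it.
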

\begin{proof}
It follows from \eqref{eq46} that $\dhat _n$ is a positive operator \cite{gsapp}. By Corollary~\ref{cor42} we have
\begin{equation*}
\rmtr\paren{\dhat _n}=\sum _{\gamma\in\Omega _n}D_n(\gamma ,\gamma )
  =\sum _{\gamma\in\Omega _n}\ab{a_\psi (\gamma )}^2=1
\end{equation*}
Hence, $\dhat _n$ is a trace 1 positive operator so $\dhat _n$ is a state on $H_n$.
\end{proof}

For every $A\in\ascript _n$ we have the vector $\ket{\chi _A}=\sum\brac{\gamma\colon\gamma\in A}$.

\begin{lem}       % Lemma 4.5
\label{lem45}
The decoherence functional satisfies
\begin{equation*}
D_n(A,B)=\rmtr\paren{\ket{\chi _B}\bra{\chi _A}\dhat _n}
\end{equation*}
\end{lem}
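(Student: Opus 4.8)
The plan is to evaluate the right-hand side directly in the orthonormal basis $\brac{\chi _{\brac{\gamma}}\colon\gamma\in\Omega _n}$ of the finite-dimensional space $H_n$, in which $\dhat _n$ is by construction the decoherence matrix. Since $\dhat _n$ is self-adjoint (indeed a state, by Theorem~\ref{thm44}), I would first invoke the standard rank-one trace identity $\rmtr\paren{\ket{u}\bra{v}T}=\elbows{v,Tu}$, valid for any bounded $T$ on a finite-dimensional space, to reduce the claim to showing $\elbows{\chi _A,\dhat _n\chi _B}=D_n(A,B)$. Alternatively one can expand $\rmtr\paren{\ket{\chi _B}\bra{\chi _A}\dhat _n}=\sum _\gamma\elbows{\chi _{\brac{\gamma}},\ket{\chi _B}\bra{\chi _A}\dhat _n\chi _{\brac{\gamma}}}$ and use $\elbows{\chi _{\brac{\gamma}},\chi _B}=\chi _B(\gamma)$ to collapse the outer trace to a sum over $\gamma\in B$; both routes lead to the same bookkeeping.

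The key computational step is the action of $\dhat _n$ on a basis vector. From the definition $(\dhat _n\phi)(\gamma)=\sum _{\gamma '}D_n(\gamma ,\gamma ')\phi (\gamma ')$, taking $\phi=\chi _{\brac{\gamma '}}$ gives $(\dhat _n\chi _{\brac{\gamma '}})(\gamma '')=D_n(\gamma '',\gamma ')$, so that $\dhat _n\chi _{\brac{\gamma '}}$ is exactly the $\gamma '$-column of the decoherence matrix. Summing over $\gamma '\in B$ and pairing against $\chi _A$ (recalling that the inner product is conjugate-linear in its first slot and that $\chi _A$ is real-valued) yields $\elbows{\chi _A,\dhat _n\chi _B}=\sum _{\gamma\in A}\sum _{\gamma '\in B}D_n(\gamma ,\gamma ')$. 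This is precisely the expansion $D_n(A,B)=\sum\brac{D_n(\gamma ,\gamma ')\colon\gamma\in A,\gamma '\in B}$ recorded just before \eqref{eq47}, which completes the identification.

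There is no serious obstacle here: the statement is a matrix-element identity, and every ingredient — the orthonormal basis, the formula $(\dhat _n\phi)(\gamma)=\sum _{\gamma '}D_n(\gamma ,\gamma ')\phi (\gamma ')$, and the double-sum expansion of $D_n(A,B)$ — is already available in the excerpt. The only points demanding care are the index conventions (which of $A,B$ occupies the first versus the second slot of $D_n$) and the conjugate-linearity of the inner product; because $H_n$ is finite-dimensional, the trace is a finite sum and no convergence issues arise. I would simply check that the factor $\bra{\chi _A}$ lands in the first argument of $D_n$ and $\ket{\chi _B}$ in the second, so that one recovers the stated formula rather than its conjugate or transpose.
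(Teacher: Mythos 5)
Your proposal is correct and takes essentially the same route as the paper: the paper's proof expands the trace in the orthonormal basis $\brac{\chi _{\brac{\gamma}}\colon\gamma\in\Omega _n}$, moves the rank-one factor across with the adjoint, and then invokes the double-sum expansion $D_n(A,B)=\sum\brac{D_n(\gamma ,\gamma ')\colon\gamma\in A,\ \gamma '\in B}$, which is exactly the computation your identity $\rmtr\paren{\ket{u}\bra{v}T}=\elbows{v,Tu}$ packages in a single step (and your alternative expansion is the paper's argument verbatim). Your flagged worry about slot conventions is harmless here: since $\dhat _n$ is self-adjoint and $D_n(\gamma ',\gamma )=\overline{D_n(\gamma ,\gamma ')}$, both orderings collapse to the stated formula.
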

\begin{proof}
For $A,B\in\ascript _n$ we have
\begin{align*}
\rmtr\paren{\ket{\chi _B}\bra{\chi _A}\dhat _n}
  &=\sum _{\gamma\in\Omega _n}\elbows{\ket{\chi _B}\bra{\chi _A}\dhat _n\gamma ,\gamma}\\
  &=\sum _{\gamma\in\Omega _n}\elbows{\dhat _n\gamma ,\ket{\chi _A}\bra{\chi _B}\gamma}
  =\sum _{\gamma\in B}\elbows{\dhat _n\gamma ,\chi _A}\\
  &=\sum\brac{\elbows{\dhat _n\gamma ,\gamma '}\colon\gamma\in B,\gamma '\in A}\\
  &=\sum\brac{D_n(\gamma ',\gamma )\colon\gamma '\in A,\gamma\in B}=D_n(A,B)\qedhere
\end{align*}
\end{proof}

The next result characterizes the eigenvalues and eigenvectors of the operator $\dhat _n$. Let
$\gamma ^0,\gamma ^1,\ldots ,\gamma ^{m-1}\in\Omega _n$ be the $n$-paths given by$\gamma ^i=00\cdots 0i$, $i=0,1,\ldots ,m-1$.

\begin{thm}       % Theorem 4.6
\label{thm46}
The nonzero eigenvalues of $\dhat _n$ have the form
\begin{equation*}
\lambda _i=\sum\brac{\mu _n(\gamma )\colon\gamma _n=i}
\end{equation*}
for $i=0,1,\ldots ,m-1$ and corresponding eigenvectors $u^i\in H_n$ have entries
\begin{equation*}
u^i(\gamma )=\overline{a_\psi (\gamma ^i)}a_\psi (\gamma )\delta _{i,\gamma _n}
\end{equation*}
\end{thm}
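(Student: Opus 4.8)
The plan is to exploit the Kronecker factor $\delta_{\gamma_n,\gamma'_n}$ appearing in \eqref{eq46}, which shows that $D_n(\gamma,\gamma')$ vanishes unless $\gamma$ and $\gamma'$ terminate at the same site. Partitioning $\Omega_n$ into the blocks $B_i=\brac{\gamma\in\Omega_n\colon\gamma_n=i}$, $i=0,1,\ldots,m-1$, this means that $\dhat_n$ is block-diagonal: it leaves invariant each subspace spanned by the basis vectors $\gamma$ with $\gamma_n=i$ and kills all cross terms between distinct blocks. On the block $B_i$ the matrix entry is $a_\psi(\gamma)\overline{a_\psi(\gamma')}$, which is exactly the outer product $\ket{v^i}\bra{v^i}$ of the vector $v^i\in H_n$ given by $v^i(\gamma)=a_\psi(\gamma)$ for $\gamma\in B_i$ and $v^i(\gamma)=0$ otherwise. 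Thus, within each block, $\dhat_n$ is a rank-one positive operator.

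A rank-one operator $\ket{v}\bra{v}$ with $v\ne 0$ has a single nonzero eigenvalue $\doubleab{v}^2$, with eigenvector $v$. Hence on the block $B_i$ the unique nonzero eigenvalue is
\begin{equation*}
\doubleab{v^i}^2=\sum_{\gamma_n=i}\ab{a_\psi(\gamma)}^2=\sum_{\gamma_n=i}\mu_n(\gamma)=\lambda_i,
\end{equation*}
where I have used $\mu_n(\gamma)=\ab{a_\psi(\gamma)}^2$. Since the $m$ blocks are mutually orthogonal and together span $H_n$, the spectrum of $\dhat_n$ is the union of the block spectra, so the $\lambda_i$ are exactly the nonzero eigenvalues, as claimed.

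For the eigenvectors, observe that the proposed $u^i$ satisfies $u^i(\gamma)=\overline{a_\psi(\gamma^i)}\,v^i(\gamma)$, so that $u^i=\overline{a_\psi(\gamma^i)}\,v^i$ is a scalar multiple of $v^i$ and hence an eigenvector for $\lambda_i$ (provided $a_\psi(\gamma^i)\ne 0$). Rather than invoke the block structure abstractly, I would simply verify the eigenvalue equation by hand: substituting $u^i$ into $(\dhat_n u^i)(\gamma)=\sum_{\gamma'}D_n(\gamma,\gamma')u^i(\gamma')$ and using \eqref{eq46}, the factor $\delta_{i,\gamma'_n}$ carried by $u^i$ forces $\gamma'_n=i$, whereupon $\delta_{\gamma_n,\gamma'_n}$ forces $\gamma_n=i$; the surviving sum over $\gamma'\in B_i$ factors out $\sum_{\gamma'_n=i}\ab{a_\psi(\gamma')}^2=\lambda_i$, leaving $\lambda_i\,\overline{a_\psi(\gamma^i)}a_\psi(\gamma)\delta_{i,\gamma_n}=\lambda_i u^i(\gamma)$.

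The computation is routine once the block decomposition is spotted; the only genuinely delicate point is the completeness claim that there are no further nonzero eigenvalues, which is handled by the orthogonal block splitting, each of the $m$ blocks being rank one and so contributing at most one nonzero eigenvalue. A secondary caveat is the degenerate case $a_\psi(\gamma^i)=0$: then the prescribed $u^i$ is the zero vector and fails to be an eigenvector, so strictly one must either assume the reference amplitude along $\gamma^i=00\cdots 0i$ is nonzero or replace $u^i$ by any nonzero vector in the block $B_i$. I expect the direct verification to be the cleanest route to present, since it avoids having to phrase the block-diagonalization in operator language and exhibits the eigenvalue $\lambda_i$ explicitly.
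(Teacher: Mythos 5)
Your proof is correct, and its second half (direct substitution of $u^i$ into the eigenvalue equation, letting the two Kronecker deltas collapse the sum) is exactly the computation the paper performs. Where you genuinely diverge is the completeness step. The paper does not use the block structure at all: having verified the eigenvectors, it invokes Corollary~\ref{cor42} and $\mu_n(\Omega_n)=1$ to get $\sum\lambda_i=1=\rmtr\paren{\dhat_n}$, and concludes from the trace normalization that the $\lambda_i$ exhaust the nonzero spectrum --- an argument that silently leans on $\dhat_n$ being a \emph{positive} operator (Theorem~\ref{thm44}), since otherwise additional nonzero eigenvalues could cancel in the trace. Your route instead decomposes $H_n$ into the $m$ mutually orthogonal blocks $B_i$ fixed by $\gamma_n=i$, observes that $\dhat_n$ restricted to each block is the rank-one operator $\ket{v^i}\bra{v^i}$ with $v^i(\gamma)=a_\psi(\gamma)\delta_{i,\gamma_n}$, and reads off that each block contributes at most the single nonzero eigenvalue $\doubleab{v^i}^2=\lambda_i$. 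This buys you two things: the completeness claim needs neither the trace normalization nor positivity, and you automatically handle the degenerate case $a_\psi(\gamma^i)=0$, where the paper's prescribed $u^i$ is the zero vector and so not literally an eigenvector --- a gap the paper only patches after the proof with the remark ``Assuming that $u^i\ne 0$,'' whereas your $v^i$ is a valid eigenvector whenever $\lambda_i\ne 0$. The paper's trace argument is shorter given the machinery already in place, but your block-diagonal argument is more self-contained and slightly more careful at the boundary case.
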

\begin{proof}
For $i=0,1,\ldots ,m-1$ we have
\begin{align*}
(\dhat _nu^i)(\gamma )&=\sum _{\gamma '\in\Omega _n}D_n(\gamma ,\gamma ')u^i(\gamma ')\\
  &=\sum _{\gamma '\in\Omega _n}a_\psi (\gamma )\overline{a_\psi (\gamma ')}\delta _{\gamma _n,\gamma '_n}
  \overline{a_\psi (\gamma ^i)}a_\psi (\gamma )\delta _{i,\gamma '_n}\\
  &=\overline{a_\psi (\gamma ^i)}a_\psi (\gamma )\delta _{i,\gamma _n}
  \sum _{\gamma '\in\Omega _n}\ab{a_\psi (\gamma '}^2\delta _{i,\gamma '_n}\\
  &=\sqbrac{\sum\brac{\ab{a_\psi (\gamma )}^2\colon\gamma _n=i}}u^i(\gamma )\\
  &=\sqbrac{\sum\brac{\mu _n(\gamma )\colon\gamma _n=i}}u^i(\gamma )
\end{align*}
This shows that $u^i$, $i=0,1,\ldots ,m-1$ are eigenvectors of $D_n$ with corresponding eigenvalues $\lambda _i$. By Corollary~\ref{cor42} and the fact that $\mu _n(\Omega _n)=1$ we have that
$\sum\lambda _i=1=\rmtr\paren{\dhat _n}$. Hence, $\lambda _0,\ldots ,\lambda _{m-1}$ include all nonzero eigenvalues of $\dhat _n$.
\end{proof}
Assuming that $u^i\ne 0$ in Theorem~\ref{thm46}, $i=0,\ldots ,m-1$, let $v_i=u^i/\doubleab{u^i}$ be the corresponding unit eigenvectors. We then have the spectral resolution $\dhat _n=\sum\lambda _iP_{v_i}$ where
$P_{v_i}$ is the projection onto the subspace spanned by $v_i$, $i=0,1,\ldots ,m-1$.

\begin{cor}       % Corollary 4.7
\label{cor47}
For the eigenvalues $\lambda _i$ of $\dhat _n$ we have for every $A\in\ascript _n$ that
\begin{equation*}
\mu _n(A)=\sum _{i=0}^{m-1}\lambda _i\ab{\sum _{\gamma\in A}\elbows{\chi _{\brac{\gamma}},v_i}}^2
\end{equation*}
\end{cor}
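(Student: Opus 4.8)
The plan is to begin from Lemma~\ref{lem45}, which expresses the decoherence functional as a trace against $\dhat _n$, and then insert the spectral resolution of $\dhat _n$. Since $\mu _n(A)=D_n(A,A)$ by definition, Lemma~\ref{lem45} gives
\begin{equation*}
\mu _n(A)=\rmtr\paren{\ket{\chi _A}\bra{\chi _A}\dhat _n}.
\end{equation*}
Substituting the spectral resolution $\dhat _n=\sum _{i=0}^{m-1}\lambda _i\ket{v_i}\bra{v_i}$ and using linearity of the trace reduces everything to evaluating $\rmtr\paren{\ket{\chi _A}\bra{\chi _A}\ket{v_i}\bra{v_i}}$ for each $i$.

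The key computational step is the standard identity for a product of two rank-one operators, $\ket{a}\bra{b}\ket{c}\bra{d}=\elbows{b,c}\ket{a}\bra{d}$, whose trace equals $\elbows{b,c}\elbows{d,a}$. Applying this with $a=b=\chi _A$ and $c=d=v_i$ yields $\rmtr\paren{\ket{\chi _A}\bra{\chi _A}\ket{v_i}\bra{v_i}}=\elbows{\chi _A,v_i}\elbows{v_i,\chi _A}=\ab{\elbows{\chi _A,v_i}}^2$, so that
\begin{equation*}
\mu _n(A)=\sum _{i=0}^{m-1}\lambda _i\ab{\elbows{\chi _A,v_i}}^2.
\end{equation*}

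To reach the stated form I would then expand $\chi _A=\sum _{\gamma\in A}\chi _{\brac{\gamma}}$, using the identification of each $n$-path $\gamma$ with the basis vector $\chi _{\brac{\gamma}}$ in $H_n$. Because the coefficients in this expansion are all real (indeed all equal to $1$), the conjugate-linearity of the inner product in its first slot still produces $\elbows{\chi _A,v_i}=\sum _{\gamma\in A}\elbows{\chi _{\brac{\gamma}},v_i}$. Substituting this into the previous display gives exactly the claimed identity.

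There is no substantive obstacle here; the argument is a direct computation once Lemma~\ref{lem45} and the spectral resolution are in hand. The only points requiring care are bookkeeping ones: tracking the conjugation in the rank-one trace identity under the convention $\elbows{f,g}=\int\fbar g\,d\nu$ so that the factor comes out as $\ab{\elbows{\chi _A,v_i}}^2$ rather than $\elbows{\chi _A,v_i}^2$, and confirming that the single-path characteristic functions $\chi _{\brac{\gamma}}$ appearing in the statement are precisely the basis vectors into which $\chi _A$ decomposes.
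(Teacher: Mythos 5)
Your proposal is correct and follows essentially the same route as the paper's own proof: both start from Lemma~\ref{lem45} to write $\mu_n(A)=\rmtr\paren{\ket{\chi_A}\bra{\chi_A}\dhat_n}$, insert the spectral resolution $\dhat_n=\sum_i\lambda_iP_{v_i}$ from Theorem~\ref{thm46}, reduce each term to $\ab{\elbows{\chi_A,v_i}}^2$, and finish by expanding $\chi_A=\sum_{\gamma\in A}\chi_{\brac{\gamma}}$. The only difference is that you spell out the rank-one trace identity that the paper leaves implicit, which is a matter of detail rather than of approach.
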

\begin{proof}
By Lemma~\ref{lem45} and Theorem~\ref{thm46} we have
\begin{align*}
\mu _n(A)&=\rmtr\paren{\ket{\chi _A}\bra{\chi _A}\dhat _n}
  =\sum _{i=0}^{m-1}\lambda _i\rmtr\paren{\ket{\chi _A}\bra{\chi _A}P_{v_i}}\\
  &=\sum _{i=0}^{m-1}\lambda _i\ab{\elbows{\chi _A,v_i}}^2
  =\sum _{i=0}^{m-1}\lambda _i\ab{\sum _{\gamma\in A}\elbows{\chi _{\brac{\gamma}},v_i}}^2\qedhere
\end{align*}
\end{proof}

\section{Finite Unitary Processes} % Section 5
This section shows how a finite unitary system can be employed to construct a finite unitary $q$-process. As in Section~4, let $S=\brac{0,1,\ldots ,m-1}$ be a set of sites and let $\Omega =S\times S\times\cdots$ be the set of all paths. Place the discrete topology on $S$ and endow $\Omega$ with the product topology. Then $\Omega$ is a compact Hausdorff space. Let $\ascript$ be the $\sigma$-algebra generated by the open sets in $\Omega$ and let
$\cscript$ be the algebra of cylinder sets
\begin{equation*}
A_0\times A_1\times\cdots\times A_n\times S\times S\times\cdots
\end{equation*}
$A_i\subseteq S$, $i=0,1,\ldots n$. Let $p_0$ be the probability measure on $S$ given by $p_0(A)=\ab{A}/m$,
$A\subseteq S$ and $p_1$ be the probability measure on $\cscript$ given by
\begin{equation*}
p_1(A_0\times A_1\times\cdots\times A_n\times S\times S\times\cdots )=p_0(A_0)p_0(A_1)\cdots p_0(A_n)
\end{equation*}
Then $p_1$ is countably additive on $\cscript$ so by the Kolmogorov extension theorem $p_1$ has a unique extension to a probability measure $\nu$ on $\ascript$. We define the \textit{path Hilbert space} by
$H=L_2(\Omega ,\ascript ,\nu )$.\medskip

\noindent\textbf{Example 1.}\enspace Let $A\in\ascript$ be the event ``the system visits the origin.'' Then we have that
\begin{equation*}
A=\brac{\omega\in\Omega\colon\omega =\omega _0\omega _1\cdots ,\ \omega _i=0\hbox{ for some }i=0,1,\ldots}
\end{equation*}
Define $B=\brac{1,2,\ldots ,m-1}$, $B_1=B\times S\times S\times\cdots$,
$B_2=B\times B\times S\times S\times\cdots$. Then $B_i\in\cscript$, $B_1\supseteq B_2\supseteq\cdots$ and
$A'=\cap B_i\in\ascript$ is the event ``the system never visits the origin.'' Now for $i=1,2,\ldots$, we have
\begin{equation*}
\nu (B_i)=\paren{\frac{m-1}{m}}^i
\end{equation*}
Hence,
\begin{equation*}
\nu (A')=\nu (\cap B_i)=\lim\nu (B_i)=\lim\paren{\frac{m-1}{m}}^i=0
\end{equation*}
Hence, $\nu (A)=1$. We will show later that the $q$-measure of $A$ is also 1.

Let $U(s,r)$, $r\le s\in\positive$ be a finite unitary system on $\cscript ^m$ and as in Section~4, let
$H_n=(\cscript ^m)^{\otimes (n+1)}$ be the $n$-path space. According to Theorem~\ref{thm44}, the operators
$\dhat _n$ are states on $H_n$, $n=1,2,\ldots$, where we identify $H_n$ with
$\brac{\phi\colon\Omega _n\to\complex}$. Then $\brac{\chi _{\brac{\gamma}}\colon\gamma\in\Omega _n}$ becomes an orthonormal basis for $H_n$. For $\gamma =\gamma _0\gamma _1\cdots\gamma _n\in\Omega _n$ define the cylinder set $\rmcyl (\gamma )$ as the subset of $\Omega$ given by
\begin{equation*}
\rmcyl (\gamma )
  =\brac{\gamma _0}\times\brac{\gamma _1}\times\cdots\times\brac{\gamma _n}\times S\times S\times\cdots
\end{equation*}
Then
\begin{equation*}
\gammahat =m^{(n+1)/2}\chi _{\rmcyl (\gamma )}
\end{equation*}
is a unit vector in $H$ and we define $U_n\chi _{\brac{\gamma}}=\gammahat$. Extending $U_n$ by linearity,
$U_n\colon H_n\to H$ becomes a unitary operator from $H_n$ into $H$. Letting $P_n$ be the projection of $H$ onto the subspace $U_nH_n$ we have
\begin{equation*}
P_nf=\sum _{\gammahat}\elbows{\gammahat ,f}\gammahat
  =m^{(n+1)}\sum _{\gamma\in\Omega _n}\int f\chi _{\rmcyl (\gamma )}d\nu\chi _{\rmcyl (\gamma )}
\end{equation*}
In particular, for $A\in\ascript$ we have
\begin{equation*}
P_n\chi _A=m^{(n+1)}\sum _{\gamma\in\Omega _n}\nu\sqbrac{A\cap\rmcyl (\gamma )}\chi _{\rmcyl (\gamma )}
\end{equation*}
Hence,
\begin{equation*}
P_n1=\sum _{\gamma\in\Omega _n}\chi _{\rmcyl (\gamma )}=1
\end{equation*}
It is also clear that $\rho _n=U_n\dhat _nU_n^*P_n$ is a state on $H$ and also on $U_nH_n$.

Let $\ascript _t$ be the algebra of all time-$t$ cylinder sets
\begin{equation*}
A=A_0\times\cdots\times A_t\times S\times S\times\cdots
\end{equation*}
where $A_i\subseteq S$, $i=0,1,\ldots ,t$. Then $\ascript _t\subseteq\ascript$, $t=0,1,\ldots$, is an increasing sequence of $\sigma$-algebras generating $\ascript$. As in Section~3, let $\nu _t$ be the restriction of $\nu$ to
$\ascript _t$, $t=0,1,\ldots\,$. Then $U_tH_t$ is isomorphic to $L_2(\Omega ,\ascript _t,\nu _t)$, $t=0,1,\ldots$, and forms an increasing sequence of subspaces on $H$.

\begin{thm}       % Theorem 5.1
\label{thm51}
The sequence of states $\rho _t$, $t=0,1,\ldots$, is consistent.
\end{thm}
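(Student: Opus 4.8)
The plan is to reduce the consistency equation $D_{\rho _{t+1}}(A,B)=D_{\rho _t}(A,B)$, required for all $A,B\in\ascript _t$, to a single marginalization identity for the finite decoherence functionals, and then to establish that identity from the unitarity of $U(t+1,t)$.

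First I would fix $A,B\in\ascript _t$, writing $A=A_0\times\cdots\times A_t\times S\times\cdots$, and let $\widetilde A\subseteq\Omega _t$ denote the corresponding set of $t$-paths $\gamma _0\cdots\gamma _t$ with $\gamma _i\in A_i$. Then $\chi _A=\sum _{\gamma\in\widetilde A}\chi _{\rmcyl (\gamma )}$ lies in $U_tH_t$, and since $U_t\chi _{\brac{\gamma}}=\gammahat$ is a rescaling of $\chi _{\rmcyl (\gamma )}$, the vector $U_t^*\chi _A$ is a fixed scalar multiple (depending only on $t$ and $m$) of the indicator $\chi _{\widetilde A}=\sum _{\gamma\in\widetilde A}\chi _{\brac{\gamma}}\in H_t$; likewise for $B$. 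Using $\rho _t=U_t\dhat _tU_t^*P_t$, the relation $P_t\chi _B=\chi _B$ (valid because $\chi _B\in U_tH_t$), and $U_t^*U_t=I$, the inner product $\elbows{\rho _t\chi _B,\chi _A}$ collapses to a scalar multiple of $\elbows{\dhat _t\chi _{\widetilde B},\chi _{\widetilde A}}$, which by the matrix form \eqref{eq46} of $\dhat _t$ (equivalently Lemma~\ref{lem45}) equals $D_t(\widetilde B,\widetilde A)$. Because $\ascript _t\subseteq\ascript _{t+1}$ and the subspaces $U_tH_t\subseteq U_{t+1}H_{t+1}$ are increasing, the identical computation at level $t+1$ applies with $P_{t+1}\chi _B=\chi _B$; now $A,B$ are read as time-$(t+1)$ cylinders whose path sets are $\widetilde A\times S,\widetilde B\times S\subseteq\Omega _{t+1}$, so $D_{\rho _{t+1}}(A,B)$ reduces to a scalar multiple of $D_{t+1}(\widetilde B\times S,\widetilde A\times S)$.

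The heart of the argument is the marginalization identity $D_{t+1}(\widetilde B\times S,\widetilde A\times S)=D_t(\widetilde B,\widetilde A)$, which I would prove from \eqref{eq46}. Each $(t+1)$-path factors as $\gamma =\alpha j$ with $\alpha\in\Omega _t$, $j\in S$, and \eqref{eq43} gives $a_\psi (\alpha j)=\elbows{e_j,U(t+1,t)e_{\alpha _t}}a_\psi (\alpha )$. In $D_{t+1}(\widetilde B\times S,\widetilde A\times S)$ the factor $\delta _{\gamma _{t+1},\gamma '_{t+1}}$ forces the two new sites to coincide, collapsing the double sum over final sites to a single sum over $j\in S$, which is
\[
\sum _{j\in S}\elbows{e_j,U(t+1,t)e_{\beta _t}}\overline{\elbows{e_j,U(t+1,t)e_{\alpha _t}}}
 =\elbows{U(t+1,t)e_{\alpha _t},U(t+1,t)e_{\beta _t}}=\delta _{\alpha _t,\beta _t}
\]
by completeness of $\brac{e_j}$ and unitarity of $U(t+1,t)$. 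This is exactly the Kronecker delta appearing in the level-$t$ formula \eqref{eq46}, so the remaining double sum is $\sum a_\psi (\beta )\overline{a_\psi (\alpha )}\delta _{\alpha _t,\beta _t}=D_t(\widetilde B,\widetilde A)$.

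Combining the two steps gives $D_{\rho _{t+1}}(A,B)=D_{\rho _t}(A,B)$ for all $A,B\in\ascript _t$, which is consistency. The conceptual crux is the collapse in the displayed sum: unitarity of $U(t+1,t)$ is precisely what makes summing out the final time-step leave the decoherence data unchanged, so the physical content is simply that the system evolves unitarily. The step requiring the most care is the bookkeeping of the scalar normalizations introduced by $U_t$ and $U_{t+1}$ (the $m^{(t+1)/2}$ factor in $\gammahat$): one must check that these constants enter $D_{\rho _t}(A,B)$ and $D_{\rho _{t+1}}(A,B)$ in the same way, so that the marginalization identity transfers to the states $\rho _t$ themselves.
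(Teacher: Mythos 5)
Your core computation coincides exactly with the paper's proof: Gudder reduces consistency to the marginalization identity $D_{n+1}(A\times S,B\times S)=D_n(A,B)$, restates it for singleton paths as in \eqref{eq52}, collapses the double sum over final sites using the factor $\delta_{\gamma _{n+1},\gamma '_{n+1}}$ from \eqref{eq46}, and then invokes the same unitarity sum you display, $\sum _j\elbows{e_j,U(n+1,n)e_{\gamma _n}}\overline{\elbows{e_j,U(n+1,n)e_{\gamma '_n}}}=\delta _{\gamma _n,\gamma '_n}$, to obtain $\sum _jb(\gamma j)\overline{b(\gamma 'j)}=b(\gamma )\overline{b(\gamma ')}\delta _{\gamma _n,\gamma '_n}$ and hence the level-$n$ Kronecker delta. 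On the heart of the matter you and the paper agree, down to the identification of unitarity as the physical reason marginalizing the last time-step is harmless.

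The one step you flag but defer --- that the scalar normalizations ``enter $D_{\rho _t}(A,B)$ and $D_{\rho _{t+1}}(A,B)$ in the same way'' --- is precisely where a literal check fails, and it is the only genuine gap in your writeup (one the paper shares: its proof simply declares $D_n$ to be ``the decoherence functional on $H_n$ given by $D_{\rho _n}$'' without tracking constants). Since $\gammahat =m^{(t+1)/2}\chi _{\rmcyl (\gamma )}$ and $\nu\sqbrac{\rmcyl (\gamma )}=m^{-(t+1)}$ for $\gamma\in\Omega _t$, one gets $U_t^*\chi _A=m^{-(t+1)/2}\chi _{\widetilde A}$, so the scalar depends on $t$:
\begin{equation*}
\elbows{\rho _t\chi _B,\chi _A}=\elbows{\dhat _tU_t^*\chi _B,U_t^*\chi _A}
  =m^{-(t+1)}\elbows{\dhat _t\chi _{\widetilde B},\chi _{\widetilde A}},
\end{equation*}
while the identical computation one level up gives $\elbows{\rho _{t+1}\chi _B,\chi _A}=m^{-(t+2)}\elbows{\dhat _{t+1}\chi _{\widetilde B\times S},\chi _{\widetilde A\times S}}$. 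After your (correct) marginalization identity, the two sides therefore differ by a factor $1/m$: with the definitions as literally written, $D_{\rho _{t+1}}(A,B)=m^{-1}D_{\rho _t}(A,B)$, and likewise $\elbows{\rho _t1,1}=m^{-(t+1)}$, not $1$. This is an artifact of the paper's normalization conventions rather than of the unitarity argument --- the same slip appears in the paper's later computation where $U_n^*1$ is treated as $\chi _{\Omega _n}$ although it is actually $m^{-(n+1)/2}\chi _{\Omega _n}$ --- and it is repaired by rescaling the identification (e.g., comparing $m^{t+1}\elbows{\rho _t\chi _B,\chi _A}$ across $t$, or phrasing consistency in terms of the unit vectors $\gammahat$ rather than the characteristic functions $\chi _{\rmcyl (\gamma )}$). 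But a fully rigorous version of your proof must make that choice explicitly; asserting that the constants cancel is not enough, because as defined they do not.
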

\begin{proof}
Let $D_n$ be the decoherence functional on $H_n$ given by $D_{\rho _n}$. To show that $\rho _t$ is consistent we must show that
\begin{equation}         % equation (5.1)
\label{eq51}
D_{n+1}(A\times S,B\times S)=D_n(A,B)
\end{equation}
for every $A,B\subseteq\Omega _n$. Using the notation $\gamma j=\gamma _0\gamma _1\cdots\gamma _nj$,
\eqref{eq51} is equivalent to
\begin{equation}         % equation (5.2)
\label{eq52}
D_n(\gamma ,\gamma ')=\sum _{j,k=0}^{m-1}D_{n+1}(\gamma j,\gamma 'k)
  =\sum _{j=0}^{m-1}D_n(\gamma j,\gamma 'j)
\end{equation}
for all $\gamma ,\gamma '\in\Omega _n$. Since
\begin{equation*}
\sum _j\elbows{U(n+1,n)e_{\gamma '_n},e_j}\elbows{e_n,U(n+1, n)e_{\gamma _n}}
  =\delta _{\gamma _n,\gamma '_n}
\end{equation*}
it follows that
\begin{equation*}
\sum _jb(\gamma j)\overline{b(\gamma 'j)}=b(\gamma )\overline{b(\gamma ')}\delta _{\gamma _n,\gamma '_n}
\end{equation*}
for all $\gamma ,\gamma '\in\Omega _n$. Hence,
\begin{align*}
\sum _jD_{n+1}(\gamma j,\gamma 'j)&=\sum _ja_\psi (\gamma j)\overline{a_\psi (\gamma 'j)}
  =\sum _jb(\gamma j)\overline{b(\gamma 'j)}\psi (\gamma _0)\overline{\psi (\gamma '_0)}\\
  &=b(\gamma )\overline{b(\gamma ')}\psi (\gamma _0)\overline{\psi (\gamma '_0)}
  =a_\psi\overline{a_\psi (\gamma ')}\delta _{\gamma _n,\gamma '_n}\\
  &=D_n(\gamma ,\gamma ')
\end{align*}
so \eqref{eq52} holds.
\end{proof}

We call the consistent sequence $\rho _t$, $t=0,1,\ldots$, a \textit{finite unitary process}. It follows from
Theorem~\ref{thm31} that the quadratic algebra $\sscript$ of suitable sets contains $\cscript$ and $\mutilde$ is a
$q$-measure on $\sscript$ that extends the natural $q$-measure $\mu$ on $\cscript$. We now consider some sets in $\sscript\smallsetminus\cscript$.

If $\gamma =\gamma _0\gamma _1\cdots\in\Omega$, letting $A_n=\rmcyl (\gamma _0\gamma _1\cdots\gamma _n)$ we have that $\brac{\gamma}=\cap A_n$. Hence,
\begin{equation*}
\nu\paren{\brac{\gamma}}=\lim _{n\to\infty}\nu (A_n)=\lim _{n\to\infty}\,\frac{1}{m^{(n+1)}}=0
\end{equation*}
It follows that $\chi _{\brac{\gamma}}=0$ as an element of $H$ so
\begin{equation*}
\lim\elbows{e_n\chi _{\brac{\gamma}},\chi _{\brac{\gamma}}}=0
\end{equation*}
We conclude that $\brac{\gamma}\in\sscript$ and $\mutilde\paren{\brac{\gamma}}=0$. Since
$\brac{\gamma}\notin\cscript$ we have that $\sscript$ is a proper extension of $\cscript$. More generally, if $B$ is a countable subset of $\Omega$, then $B\in\ascript$ and $\nu (B)=0$. Hence, $B\in\sscript$ and $\mutilde (B)=0$. Moreover, since $\nu (B')=1$, $\chi _{B'}=1\hbox{ a.e.}[\nu ]$. Since $\mu _n(\Omega _n)=1$ we have that
\begin{align*}
\elbows{e_n\chi _{B'},\chi _{B'}}&=\elbows{e_n1,1}=\elbows{U_n\dhat _nU_n^*P_n1,1}\\
  &=\elbows{\dhat _nU_n^*1,U_n1}=\elbows{\dhat _n\chi _{\Omega _n},\chi _{\Omega _n}}\\
  &=\mu _n(\Omega _n)=1
\end{align*}
Hence, $B'\in\sscript$ and $\mutilde (B')=1$.

Let $A\in\ascript$ be the event ``the system visits the origin'' of Example~1. Then $A'$ is the event ``the system never visits the origin'' and we showed in Example~1 that $\nu (A')=0$. Hence $A'\in\sscript\smallsetminus\cscript$ and
$\mutilde (A')=0$. Thus, the $q$-propensity that the system never visits the origin is 0. Since $\nu (A)=1$,
$\chi _A=1\hbox{ a.e.}[\nu ]$. As before we have that $A\in\sscript$ and $\mutilde (A)=1$. Of course, the origin can be replaced by any of the sites $1,2,\ldots ,m-1$. 

As another example, let $B_t$ be the event ``the system visits site $j$ for the first time at $t$.'' Letting
\begin{equation*}
C=\brac{0,1,\ldots ,j-1,j+1,\ldots ,m-1}
\end{equation*}
we see that $B_t\in\cscript$ and
\begin{equation*}
B_t=C\times C\times\cdots\times C\times\brac{j}\times S\times S\times\cdots
\end{equation*}
where there are $t$ factors of $C$. We have that
\begin{equation*}
\nu (B_t)=\paren{\frac{m-1}{m}}^t\frac{1}{m}=\frac{(m-1)^t}{m^{t+1}}
\end{equation*}
Since $B_t\in\cscript$ we have that
\begin{equation*}
\mutilde (B_t)=\mu _t(B_t)=\elbows{\dhat _t\chi _{B_t},\chi _{B_t}}
\end{equation*}
which depends on $U$.

We close this section with an example of a two-site quantum random walk or a two-hopper \cite{gsapp, sor11}. The unitary operator
\begin{equation*}
U=\frac{1}{\sqrt{2\,}}
\left[\begin{matrix}\noalign{\smallskip}1&i\\\noalign{\smallskip}i&1\\\noalign{\smallskip}\end{matrix}\right]
\end{equation*}
generates a stationary unitary system on $\complex ^2$. We think of this system as a particle that is located at one of the two-sites $S=\brac{0,1}$. We assume that the initial state is $e_0$ so the particle always begins at site 0. In this case, we can let $\Omega =\brac{0}\times S\times S\times\cdots$ and we obtain a finite unitary process on
$H=L_2(\Omega ,\ascript ,\nu )$. Let $\mu _n$ be the $q$-measure on $\ascript _n=2^{\Omega _n}$ given by
$\mu _n(A)=\elbows{\dhat _n\chi _A,\chi _A}$. Let $C_t$ be the event ``the particle is at site 1 at time $t$.'' Then
$C_t\in\cscript$ and we have
\begin{equation*}
C_t=\brac{0}\times S\times\cdots\times S\times\brac{1}\times S\times S\cdots
\end{equation*}
Letting $E_t=\brac{0}\times S\times\cdots\times S\times\brac{1}$ we have that $E_t\in\ascript _t$ and
$\mutilde (C_t)=\mu _t(E_t)$. We now compute $\mu _t(E_t)$ for $t=1,2,\ldots\,$.

Of course, $\nu _t(E_t)=1/2$ for $t=1,2,\ldots$, which is the usual classical result. As we shall see, the quantum result is somewhat surprising. This result suggests a periodic motion with period 4. To compute $\mu _t(E_t)$ we shall employ Corollary~\ref{cor47}. We see from the form of $U$ that $\mu _n\paren{\brac{0}}=1/2^n$ for every
$\gamma\in\Omega _n$. Hence, by Theorem~\ref{thm46}, $\dhat _n$ has eigenvalues $1/2$ with multiplicity 2 and eigenvalues 0 with multiplicity $2^n-2$. For $\gamma\in\Omega _n$, $\gamma$ is the binary representation for a unique integer
$a\in\brac{0,1,\ldots ,2^n-1}$. We then identify $\Omega _n$ with $\brac{0,1,\ldots ,2^n-1}$. Let $c(\gamma )$ be the number of position changes (bit flips) in $\gamma$. Applying Theorem~4.7 of \cite{gsapp}, the unit eigenvectors corresponding to $1/2$ are $\psi _0^n,\psi _1^n$ given by
\begin{equation*}
\psi _0^n=\frac{1}{2^{(n-1)/2}}
\left[\begin{matrix}\noalign{\smallskip}i^{c_n(0)}\\0\\i^{c_n(2)}\\0\\\vdots\\i^{c_n(2^n-2)}\\0\\\end{matrix}\right],\quad
\psi _1^n=\frac{1}{2^{(n-1)/2}}
\left[\begin{matrix}\noalign{\smallskip}0\\i^{c_n(1)}\\0\\i^{c_n(3)}\\0\\\vdots\\0\\i^{c_n(2^n-1)}\\\end{matrix}\right]
\end{equation*}
In order to compute $\psi _0^n,\psi _1^n$ the following lemma is useful.

\begin{lem}       % Lemma 5.2
\label{lem52}
{\rm\cite{gsapp}}. For $n\in\positive$, $j=0,1,\ldots ,2^n-1$, the function $c_n(j)$ satisfies
\begin{equation*}
c_{n+1}(2^{n+1}-1-j)=c_n(j)+1
\end{equation*}
\end{lem}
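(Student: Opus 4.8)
The plan is to prove the identity by a direct bit–counting argument, using the combinatorial meaning of $c_n$. Recall that in the two–site case every path in $\Omega _n$ has the form $\gamma =\gamma _0\gamma _1\cdots\gamma _n$ with $\gamma _0=0$ fixed, and that for the path $\gamma$ attached to $j$ we have $c_n(j)=\sum _{i=0}^{n-1}[\gamma _i\neq\gamma _{i+1}]$, where $[\,\cdot\,]$ equals $1$ when the displayed relation holds and $0$ otherwise. Under the identification of $j\in\brac{0,1,\ldots ,2^n-1}$ with the $n$–bit string $\gamma _1\cdots\gamma _n$, the integer $j$ corresponds to the path $\gamma =0\,\gamma _1\cdots\gamma _n$.

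First I would identify the arithmetic operation $j\mapsto 2^{n+1}-1-j$ with bitwise complementation. Writing $j$ as an $(n+1)$–bit number, the constraint $j\le 2^n-1$ forces a leading $0$, so $j=(0\,\gamma _1\cdots\gamma _n)_2$; since $2^{n+1}-1=(\underbrace{1\cdots 1}_{n+1})_2$, subtraction yields $2^{n+1}-1-j=(1\,\overline{\gamma _1}\cdots\overline{\gamma _n})_2$, where $\overline{b}=1-b$. Hence the level–$(n+1)$ path $\gamma '$ attached to $2^{n+1}-1-j$ is $\gamma '=0\,1\,\overline{\gamma _1}\cdots\overline{\gamma _n}$, its forced leading site being $\gamma '_0=0$.

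The core of the argument is the observation that $\gamma '$ is obtained from $\gamma$ by two elementary operations whose effect on the transition count is transparent. Let $\overline{\gamma}=\overline{\gamma _0}\,\overline{\gamma _1}\cdots\overline{\gamma _n}=1\,\overline{\gamma _1}\cdots\overline{\gamma _n}$ be the entrywise complement of $\gamma$. Because $b\neq b'$ if and only if $\overline{b}\neq\overline{b'}$, complementation preserves every adjacent comparison, so $\overline{\gamma}$ has exactly the same number of position changes as $\gamma$, namely $c_n(j)$. Now $\gamma '$ is precisely $\overline{\gamma}$ with the fixed site $0$ prepended; since $\overline{\gamma}$ begins with $\overline{\gamma _0}=1$, this prepending introduces exactly one new transition (the step $0\to 1$) and leaves all later comparisons untouched. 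Therefore $c_{n+1}(2^{n+1}-1-j)=c_n(j)+1$, as claimed.

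The only genuine obstacle is bookkeeping: one must keep straight the two distinct conventions forced by $\gamma _0=0$, namely that $j$ carries a suppressed leading $0$ in its $(n+1)$–bit expansion (which becomes the leading $1$ of the complement) and that every path carries an additional fixed initial $0$ (which supplies the extra transition). Once the length–$n$ bit string, its $(n+1)$–bit numerical padding, and the length–$(n+1)$ path are carefully distinguished, the count is immediate; an induction on $n$ is possible but unnecessary, since the complement–and–prepend decomposition settles the identity in a single step.
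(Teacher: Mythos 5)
Your proof is correct. Note, however, that there is no in-paper proof to compare against: the paper states Lemma 5.2 with a citation to \cite{gsapp} and omits the argument entirely, so your bit-counting proof supplies what the text delegates to a reference. Both of your key steps are valid under the paper's conventions (paths $\gamma=\gamma_0\gamma_1\cdots\gamma_n$ with $\gamma_0=0$ forced, and $\gamma$ read as the binary expansion of the integer $j$): first, subtracting $j\le 2^n-1$ from $2^{n+1}-1$ is bitwise complementation in $n+1$ bits, which turns the suppressed leading $0$ of $j$ into the leading $1$ of $2^{n+1}-1-j$; second, the level-$(n+1)$ path attached to $2^{n+1}-1-j$ is that complement with a fresh $0$ prepended, and since complementation preserves every adjacent inequality, the new path has exactly the original $c_n(j)$ transitions plus the single new $0\to 1$ step at the front. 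This gives $c_{n+1}(2^{n+1}-1-j)=c_n(j)+1$, and your argument is consistent with the values $c_1=(0,1)$, $c_2=(0,1,2,1)$, $c_3=(0,1,2,1,2,3,2,1)$ that the paper derives from the lemma, which is a good independent check.
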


We employ the vector notation $c_n=\paren{c_n(0),c_n(1),\ldots ,c_n(2^n-1)}$. Applying Lemma~\ref{lem52}, since $c_0=(0)$ it follows that $c_1(0,1)$, $c_2=(0,1,2,1)$ and $c_3=(0,1,2,1,2,3,2,1)$. Hence,
\begin{align*}
\psi _0^1&=
\left[\begin{matrix}1\\0\end{matrix}\right],\quad
\psi _1^1=\left[\begin{matrix}0\\i\end{matrix}\right]\\
\psi _0^2&=\frac{1}{\sqrt{2\,}}
\left[\begin{matrix}\phantom{-}1\\\phantom{-}0\\-1\\\phantom{-}0\end{matrix}\right],\quad
\psi _1^2=\frac{1}{\sqrt{2\,}}
\left[\begin{matrix}0\\i\\0\\i\end{matrix}\right]\displaybreak[0]\\
\psi _0^3&=\frac{1}{2}
\left[\begin{matrix}\phantom{-}1\\\phantom{-}0\\-1\\\phantom{-}0\\-1\\\phantom{-}0\\-1\\
  \phantom{-}0\end{matrix}\right],\quad
\psi _0^3=\frac{1}{2}
\left[\begin{matrix}\phantom{-}0\\\phantom{-}i\\\phantom{-}0\\\phantom{-}i\\\phantom{-}0\\-i\\
  \phantom{-}0\\\phantom{-}i\end{matrix}\right],\quad
\end{align*}
Applying Corollary~\ref{cor47} we have that $\mu _0(E_0)=0$ and
\begin{align*}
\mu _1(E_1)&=\mu _1\paren{\brac{1}}=\tfrac{1}{2}\ab{\elbows{\chi _{\brac{1}},\psi _0^1}}^2
  +\tfrac{1}{2}\ab{\elbows{\chi _{\brac{1}},\psi _1^1}}^2\\
  &=1/2\\
  \mu _2(E_2)&=\mu _2\paren{\brac{1,3}}=\tfrac{1}{2}\ab{\elbows{\chi _{\brac{1}},\psi _0^2}
  +\elbows{\chi _{\brac{3}},\psi _0^2}}^2\\
  &\quad +\tfrac{1}{2}\ab{\elbows{\chi _{\brac{1}},\psi _1^2}+\elbows{\chi _{\brac{3}},\psi _1^2}}^2\\
  &=\frac{1}{2}\ab{\frac{1}{\sqrt{2\,}}\,i+\frac{1}{\sqrt{2\,}}\,i}^2=1\\
  \mu _3(E_3)&=\mu _3\paren{\brac{1,3,5,7}}
  =\tfrac{1}{2}\ab{\tfrac{1}{2}\,i+\tfrac{1}{2}\,i-\tfrac{1}{2}\,i+\tfrac{1}{2}\,i}^2=1/2
\end{align*}

In general, we have
\begin{align*}
\mu _t(E_t)&=\mu _t\paren{\brac{1,3,\ldots ,2^t-1}}
  =\frac{1}{2}\ab{\sum _{j=1}^{2^t-1}\elbows{\chi _{\brac{j}},\psi _1^t}}^2\\
  &=\frac{1}{2^t}\ab{\sum _{\substack{j=1\\j\ \mathrm{odd}}}^{2^t-1}i^{c_t(j)}}^2
\end{align*}
Letting
\begin{equation*}
F(t)=\sum _{\substack{j=0\\j\ \mathrm{even}}}^{2^t-2}i^{c_t(j)},\qquad
 G(t)\sum _{\substack{j=1\\j\ \mathrm{odd}}}^{2^t-1}i^{c_t(j)}
\end{equation*}
we have
\begin{equation}         % equation (5.3)
\label{eq53}
\mu _t(E_t)=\tfrac{1}{2^t}\ab{G(t)}^2
\end{equation}
Notice we have shown that $G(0)=0$, $G(1)=i$, $G(2)=G(3)=2i$.

\begin{lem}       % Lemma 5.3
\label{lem53}
For $j=0,1,2,3$ and $m\in\positive$ we have $G(4m+j)=(-4)^mG(j)$.
\end{lem}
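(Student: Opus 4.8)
The plan is to derive a coupled pair of first-order recurrences for $F(t)$ and $G(t)$, encode them in a single $2\times2$ matrix $M$, and read off the fourfold relation from $M^4$. Since the values $G(0),\dots,G(3)$ are already in hand, the whole lemma reduces to the single identity $G(t+4)=-4\,G(t)$, valid for every $t$, after which an induction on $m$ finishes the argument.

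The key step is to split the summation range $\{0,1,\dots,2^{t+1}-1\}$ appearing in $F(t+1)$ and $G(t+1)$ into a lower half $0\le j<2^t$ and an upper half $2^t\le j<2^{t+1}$, according to the leading bit of the $(t+1)$-bit binary expansion of $j$. For the lower half the leading bit is $0$, so the path for $c_{t+1}(j)$ agrees with the $t$-path for $c_t(j)$ except for an extra site $0$ adjacent to the initial site $0$, which adds no position change; hence $c_{t+1}(j)=c_t(j)$. For the upper half, Lemma~\ref{lem52} with $n=t$ gives $c_{t+1}(j)=c_t(2^{t+1}-1-j)+1$, so that $i^{c_{t+1}(j)}=i\,i^{c_t(2^{t+1}-1-j)}$. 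The decisive observation is that, because $2^{t+1}-1$ is odd, the involution $j\mapsto 2^{t+1}-1-j$ sends even upper-half indices to odd lower-half indices and odd upper-half indices to even lower-half indices. Feeding the two pointwise identities into the definitions of $F(t+1)$ and $G(t+1)$ and re-indexing the upper-half sums through this involution yields
\begin{align*}
F(t+1)&=F(t)+iG(t),\\
G(t+1)&=iF(t)+G(t);
\end{align*}
I would check these against the tabulated values $F(0)=1$, $G(0)=0$, $G(1)=i$, $G(2)=G(3)=2i$ as a sanity test.

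These recurrences say precisely that the column vector $(F(t),G(t))^{T}$ is advanced by the matrix $M=\begin{pmatrix}1&i\\ i&1\end{pmatrix}$. A direct computation gives $M^2=\begin{pmatrix}0&2i\\ 2i&0\end{pmatrix}$ and hence $M^4=-4I$, so $\bigl(F(t+4),G(t+4)\bigr)=-4\bigl(F(t),G(t)\bigr)$ and in particular $G(t+4)=-4\,G(t)$ for all $t$. An induction on $m$ then gives the lemma: the case $m=0$ is trivial, and $G(4(m+1)+j)=-4\,G(4m+j)=-4(-4)^m G(j)=(-4)^{m+1}G(j)$. I expect the main obstacle to be the bookkeeping in the middle step---verifying that $j\mapsto 2^{t+1}-1-j$ exchanges the two parity classes between the halves, since getting this wrong would destroy the clean coupling between $F$ and $G$; once the recurrences are correct, the matrix computation and the induction are routine.
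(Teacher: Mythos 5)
Your proof is correct, and it diverges from the paper's at the decisive computational step. You share the paper's skeleton---the coupled recurrences $G(t)=G(t-1)+iF(t-1)$ and $F(t)=F(t-1)+iG(t-1)$ coming from Lemma~\ref{lem52}---but where the paper simply asserts these with ``applying Lemma~\ref{lem52},'' you actually derive them, via the leading-bit split, the lower-half identity $c_{t+1}(j)=c_t(j)$ (prepending a $0$ next to the fixed initial site adds no flip), and the observation that the involution $j\mapsto 2^{t+1}-1-j$ exchanges parity classes between the halves because $j+(2^{t+1}-1-j)$ is odd; this fills in bookkeeping the paper leaves implicit, and your parities are right (they must be tracked with the least significant bit, which records the final position). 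From there the routes differ: the paper eliminates $F$ by repeatedly unfolding the recurrence and telescoping the sums $G(t-1),G(t-2),\ldots ,G(1)$ down to $F(1)=1$---a fragile scalar computation which, as printed, even contains slips (a stray $G(i)$ and garbled signs mid-display), though it lands correctly on $G(t)=-4G(t-4)$---whereas you encode the recurrences as $\paren{F(t+1),G(t+1)}^T=M\paren{F(t),G(t)}^T$ with $M=\left[\begin{smallmatrix}1&i\\ i&1\end{smallmatrix}\right]$ and read off $M^2=\left[\begin{smallmatrix}0&2i\\ 2i&0\end{smallmatrix}\right]$, $M^4=-4I$. The matrix route is shorter and less error-prone, and it delivers $F(4m+j)=(-4)^mF(j)$ simultaneously, a fact the paper invokes separately later when computing $\mu _t(G_t)$; it also explains the period conceptually, since $M=\sqrt{2\,}\,U$ and $M^4=-4I$ is just $U^4=-I$ for the walk's evolution operator. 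Your closing induction on $m$ coincides with the paper's ``continuing this process.''
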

\begin{proof}
Applying Lemma~\ref{lem52} we have that $G(t)=G(t-1)+iF(t-1)$ and $F(t)=F(t-1)+iG(t-1)$. Hence,
\begin{align*}
G(t)&=G(t-1)+i\sqbrac{F(t-2)+iG(t-2)}=G(t-1)-G(t-2)+iF(t-\!2)\\
  &=G(t-1)-G(t-2)+i\sqbrac{F(t-3)+iG(t-3)}\\
  &=G(t-1)-G(t-2)-G(t-3)+iF(t-3)
\end{align*}
Continuing this process we have
\begin{align*}
G(t)&=G(t-1)-G(t-2)-\cdots -G(1)+iF(1)\\
  &=G(t-1)-G(t-2)-\cdots -G(1)+i
\end{align*}
Hence,
\begin{align*}
G(t)&=G(t-2)-G(t-3)-\cdots -G(1)+i\\
  &\quad -G(t-2)-G(t-3)-\cdots -G(i)+i\\
  &=-2\sqbrac{G(t-3)-G(t-4)-\cdots -G(1)+i}\\
  &=-2\sqbrac{G(t-4)-G(t-5)-\cdots -G(1)+i}\\
  &\quad-2\sqbrac{G(t-4)+G(t-5)+\cdots +G(1)-i}\\
  &=-4G(t-4)
\end{align*}
Letting $t=4m+j$, $j=0,1,2,3$ we have
\begin{align*}
G(t)&=-4G(4(m-1)+j)=-4\sqbrac{-4G(4(m-2)+j)}\\
  &=(-4)^2G(4(m-2)+j)
\end{align*}
Continuing this process gives our result
\end{proof}

Applying \eqref{eq53} and Lemma~\ref{lem53} we have for $t=4m+j$, $j=0,1,2,3$ and $m\in\positive$ that
\begin{equation*}
\mu _t(E_t)=\frac{1}{2^{4m+j}}\,4^{2m}\ab{G(j)}^2=\frac{1}{2^j}\,\ab{G(j)}^2
\end{equation*}
This shows that $\mu _t(E_t)$ is periodic with period 4. The first few values of $\mu _t(E_t)$ are $0,1/2,1,1/2,0,1/2,1,1/2,0,\ldots\,$.

Let $F_t$ be the event ``the particle is at site 0 at time $t$.'' Then $F_t\in\cscript$ and we have
\begin{equation*}
F_t=\brac{0}\times S\times\cdots\times S\times\brac{0}\times S\times S\cdots
\end{equation*}
Again, setting
\begin{equation*}
G_t=\brac{0}\times S\times\cdots\times S\times\brac{0}
\end{equation*}
we have that $G_t\in\ascript _t$ and $\mutilde (F_t)=\mu _t(G_t)$. Of course, $G_t=E'_t$ and we have computed
$\mu _t(E_t)$. It does not immediately follow that $\mu _t(G_t)=1-\mu _t(E_t)$ because $\mu _t$ is not additive. However, as in \eqref{eq53} we have that $\mu _t(G_t)=2^{-t}\ab{F(t)}^2$ where $F(0)=1$, $F(1)=1$, $F(2)=0$,
$F(3)=-2$. As in Lemma~\ref{lem53}, for $j=0,1,2,3$ and $m\in\positive$ we have that $F(4m+j)=(-4)^mF(j)$. We conclude that in this case, $\mu _t(G_t)=1-\mu _t(E_t)$. Thus, $\mu _t(G_t)$ is periodic with period 4. The first few values of $\mu _t (G_t)$ are $1,1/2,0,1/2,1,1/2,0,1/2,\cdots\,$.

Finally, let $f_t$ be the random variable that gives the position of the particle at time $t$. Thus, for
$\gamma\in\Omega$, $\gamma =\gamma _0\gamma _1\cdots$, $f_t(\gamma )=\gamma _t$. Then $f=\chi _{C_t}$
and we have
\begin{equation*}
\int f_td\mutilde =\mutilde (C_t)=\mu _t(E_t)=\frac{1}{2^t}\,\ab{G(t)}^2
\end{equation*}
which we have already computed.

\section{Quantum Integrals} % Section 6
Let $\rho _t$, $t=0,1,2,\ldots$, be a discrete $q$-process on $H=L_2(\Omega ,\ascript ,\nu)$. As in Section~3, a random variable $f\in H$ is integrable if $\lim (\rho _t\fhat )$ exists and is finite, in which case $\int fd\mutilde$ is this limit. If a global state $\rho$ exists, it is also of interest to compute the integral $\int fd\mu _\rho =\rmtr (\rho\fhat )$. The operator $\fhat$ can be complicated and the expression $\rmtr (\rho\fhat )$ difficult to evaluate. This section considers the case in which $f$ is a simple function. A general random variable can be treated using Corollary~\ref{cor26}(d). We first find the eigenvectors and eigenvalues for two-valued random variables.

If $f$ has the form $f=\alpha\chi _A$, $\alpha\in\real$, then $\fhat =\alpha\chihat _A$ and
\begin{equation*}
\int fd\mu _t=\rmtr (\rho\fhat\,)=\alpha\elbows{\rho\chi _A,\chi _A}
\end{equation*}
The next result treats nonnegative random variables with two nonzero values

\begin{thm}       % Theorem 6.1
\label{thm61}
If $f=\alpha\chi _A+\gamma\chi _B$ where $A\cap B=\emptyset$, $\nu (A)\nu (B)\ne 0$ and $0<\alpha <\beta$ then
$\fhat$ has two nonzero eigenvalues
\begin{equation*}
\lambda _{\pm}
  =\frac{\alpha\nu (A)+\gamma\nu (B)\pm\sqrt{\sqbrac{\alpha\nu (A)-\gamma\nu (B)}^2+2\alpha ^2\nu (A)\nu (B)}}
  {2\nu (B)}
\end{equation*}
with corresponding eigenvectors
\begin{equation*}
g_{\pm}=\chi _A+b_{\pm}\chi _B
\end{equation*}
where $b_{\pm}=\sqbrac{\lambda _{\pm}-\nu (A)}/\nu (B)$.
\end{thm}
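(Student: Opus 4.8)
The plan is to compute $\fhat$ directly from its defining formula, observe that it is a rank-two operator, and thereby reduce the eigenvalue problem to a $2\times 2$ matrix computation. Write $\gamma$ for the value of $f$ on $B$, so that $f$ takes the three values $0,\alpha,\gamma$ with $0<\alpha<\gamma$. First I would evaluate $(\fhat g)(y)=\int\min\sqbrac{f(x),f(y)}g(x)\,d\nu(x)$ by splitting on the location of $y$. Since $f$ is a step function, for each fixed $y$ the integrand $\min\sqbrac{f(x),f(y)}$ is itself a step function: for $y\in A$ one has $\min\sqbrac{f(x),\alpha}=\alpha$ on $A\cup B$ and $0$ elsewhere; for $y\in B$ one has $\min\sqbrac{f(x),\gamma}=\alpha$ on $A$, $=\gamma$ on $B$, and $0$ elsewhere; and for $y\notin A\cup B$ the minimum vanishes identically. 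Writing $p=\elbows{\chi_A,g}=\int_Ag\,d\nu$ and $q=\elbows{\chi_B,g}=\int_Bg\,d\nu$, these three cases combine into
\[
\fhat g=\alpha(p+q)\chi_A+(\alpha p+\gamma q)\chi_B .
\]

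This formula is the crux: it shows $\rmrange(\fhat)\subseteq\rmspan\brac{\chi_A,\chi_B}$ and that $\fhat$ annihilates the orthogonal complement of that two-dimensional space, so every eigenvector for a nonzero eigenvalue must lie in $\rmspan\brac{\chi_A,\chi_B}$. Substituting the ansatz $g=a\chi_A+b\chi_B$ and using the disjointness $\elbows{\chi_A,\chi_B}=0$ together with $p=a\nu(A)$ and $q=b\nu(B)$, the relation $\fhat g=\lambda g$ becomes a $2\times 2$ system whose matrix in the orthogonal but unnormalized basis $\brac{\chi_A,\chi_B}$ is
\[
M=\begin{pmatrix}\alpha\nu(A)&\alpha\nu(B)\\ \alpha\nu(A)&\gamma\nu(B)\end{pmatrix}.
\]

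I would then read off the nonzero eigenvalues as the roots of $\det(M-\lambda I)=0$, that is, of the quadratic $\lambda^2-\sqbrac{\alpha\nu(A)+\gamma\nu(B)}\lambda+\alpha(\gamma-\alpha)\nu(A)\nu(B)=0$, and apply the quadratic formula to obtain $\lambda_\pm$ in closed form. The eigenvector coefficient is recovered from the first row of $(M-\lambda I)g=0$, namely $\lambda_\pm=\alpha\nu(A)+\alpha\nu(B)\,b_\pm$, which pins down $b_\pm$ and gives $g_\pm=\chi_A+b_\pm\chi_B$. Finally I would use the hypotheses $\nu(A)\nu(B)\neq 0$ and $0<\alpha<\gamma$ to check that the discriminant $\sqbrac{\alpha\nu(A)-\gamma\nu(B)}^2+4\alpha^2\nu(A)\nu(B)$ is strictly positive (two distinct real roots, as forced by self-adjointness) and that the product of the roots $\alpha(\gamma-\alpha)\nu(A)\nu(B)$ is positive (both roots nonzero), so there are exactly two nonzero eigenvalues as claimed.

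The difficulty here is bookkeeping rather than conceptual depth. The one genuinely delicate point is that $\brac{\chi_A,\chi_B}$ is orthogonal but not orthonormal, so the factors $\nu(A)$ and $\nu(B)$ must be tracked carefully when passing from the integral formula to $M$; in particular $M$ is not symmetric even though $\fhat$ is self-adjoint, and this is precisely where a stray factor can creep in. I would therefore recompute the discriminant and the eigenvector normalization independently against the quadratic to confirm the constants appearing in the stated closed forms.
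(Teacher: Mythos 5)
Your proof is correct and takes essentially the same route as the paper's: evaluate $\fhat g$ case by case to conclude $\rmrange (\fhat\,)=\rmspan\brac{\chi _A,\chi _B}$, substitute the ansatz $g=a\chi _A+b\chi _B$, and solve the resulting quadratic; the only difference is that the paper first normalizes to $f=\chi _A+\beta\chi _B$ and recovers the general case at the end via $(\alpha f)^\wedge =\alpha\fhat$, whereas you carry $\alpha$ and $\gamma$ throughout. Where your final formulas disagree with the theorem as printed, you are the one who is right and the printed constants are misprints: the quadratic $\lambda ^2-\sqbrac{\alpha\nu (A)+\gamma\nu (B)}\lambda +\alpha (\gamma -\alpha )\nu (A)\nu (B)=0$, which both you and (after rescaling) the paper's proof derive, has roots with denominator $2$ rather than $2\nu (B)$ and discriminant term $4\alpha ^2\nu (A)\nu (B)$ rather than $2\alpha ^2\nu (A)\nu (B)$, and correspondingly $b_{\pm}=\sqbrac{\lambda _{\pm}-\alpha\nu (A)}/\sqbrac{\alpha\nu (B)}$, as a direct check on a two-point probability space confirms.
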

\begin{proof}
We first treat the case in which $f$ has the form
\begin{equation*}
f=\chi _A+\beta\chi _B
\end{equation*}
where $A\cap B=\emptyset$, $\nu (A)\nu (B)\ne 0$ and $\beta >1$. For $g\in H$ we have
\begin{align}         % equation (6.1)
\label{eq61}
\fhat g(x)&=\int\min\paren{f(x),f(y)}g(y)d\nu (y)\notag\\
  &=\int _{\brac{y\colon f(y)\ge f(x)}}f(x)g(y)d\nu (y)+\int _{\brac{y\colon f(y)<f(x)}}f(y)g(y)d\nu (y)
\end{align}
Letting $C=(A\cup B)'=A'\cap B'$ it follows from \eqref{eq61} that:

If $x\in C$, then $(\fhat g)(x)=0$

If $x\in A$, then $(\fhat g)(x)=\int _{A\cup B}g(y)d\nu (y)$

If $x\in B$ then $(\fhat g)(x)=\beta\int _Bg(x)d\nu (y)+\int _Ag(y)d\nu (y)$\newline
We conclude that
\begin{equation}         % equation (6.2)
\label{eq62}
(\fhat g)(x)=\int _{A\cup B}g(y)d\nu (y)\chi _A+\sqbrac{\beta\int _Bg(y)d\nu (y)+\int _Ag(y)d\nu (y)}\chi _B
\end{equation}
If $g\perp\chi _A$ and $g\perp\chi _B$ then $\fhat g=0$. Thus,
\begin{equation*}
\rmrange (\fhat\,)=\rmspan\brac{\chi _A,\chi _B}
\end{equation*}
It follows that eigenvectors corresponding to nonzero eigenvalues have the form $g=a\chi _A+b\chi _B$,
$a,b\in\complex$. Assuming that $g$ has this form and $\fhat g=\lambda g$, applying \eqref{eq62} gives
\begin{align*}
(\fhat g)(x)&=\sqbrac{a\nu (A)+b\nu (B)}\chi _A+\sqbrac{\beta b\nu (B)+a\nu (A)}\chi _B\\
  &=a\lambda\chi _A+b\lambda\chi _B
\end{align*}
Hence,
\begin{equation*}
a\lambda =a\nu (A)+b\nu (B),\quad b\lambda =\beta b\nu (B)+a\nu (A)
\end{equation*}
Notice that $a\ne 0$ since otherwise $b=0$ which contradicts $g\ne 0$. We can therefore assume that $a=1$. Then
$b=\sqbrac{\lambda -\nu (A)}/\nu (B)$ and $b\sqbrac{\lambda -\beta\nu (B)}=\nu (A)$. Eliminating $b$ from these two equations gives
\begin{equation*}
\lambda ^2-\sqbrac{\nu (A)+\beta\nu (B)}\lambda +(\beta -1)\nu (A)\nu (B)=0
\end{equation*}
Applying the quadratic formula we have the two solutions
\begin{equation*}
\lambda _{\pm}=\frac{\nu (A)+\beta\nu (B)\pm\sqrt{\sqbrac{\nu (A)-\beta\nu (B)}^2+2\nu A()\nu (B)}}{2\nu (B)}
\end{equation*}
and corresponding eigenvectors are
\begin{equation*}
g_{\pm}=\chi _A+b_{\pm}\chi _B
\end{equation*}
where $b_{\pm}=\frac{\lambda _{\pm}-\nu (A)}{\nu (B)}$. Since the original $f$ satisfies
\begin{equation*}
f=\alpha\sqbrac{\chi _A+\frac{\gamma}{\alpha}\chi _B}
\end{equation*}
letting $\beta =\gamma /\alpha$ and multiplying $\lambda _{\pm}$ by $\alpha$ gives the result.
\end{proof}

The eigenvectors $g_{\pm}$ need not be normalized but this can easily be done. Theorem~\ref{thm61} treats random variables with two positive values. The remaining case for two-valued random variables is one of the form
$f=\alpha\chi _A+\beta\chi _B$ where $A\cap B=\emptyset$, $\nu (A)\nu (B)\ne 0$ and $\alpha >0$, $\beta <0$. This case is easy to treat because we can write $f=\alpha\chi _A-(-\beta)\chi _B$ so that
\begin{equation*}
\fhat =\alpha\chihat _A-(-\beta )\chihat _B=\alpha\ket{\chi _A}\bra{\chi _A}+\beta\ket{\chi _B}\bra{\chi _B}
\end{equation*}
Hence, the nonzero eigenvalues of $\fhat$ are $\alpha\nu (A)$, $\beta\nu (B)$ with corresponding unit eigenvectors $\nu (A)^{-1/2}\chi _A$, $\nu (B)^{-1/2}\chi _B$. We can thus find the eigenvalues $\lambda _i$ and normalized eigenvectors $v_i$, $i=1,2$, for an arbitrary two-valued random variable $f$. If $\rho$ is a state and $\mu _\rho$ the corresponding $q$-measure, we have
\begin{align*}
\int fd\mu _\rho&=\rmtr (\rho\fhat\,)=\elbows{\rho\fhat v_1,v_1}+\elbows{\rho\fhat v_2,v_2}\\
  &=\lambda _1\elbows{\rho v_1,v_1}+\lambda _2\elbows{\rho v_2,v_2}
\end{align*}

Now suppose $f=\alpha\chi _A+\beta\chi _B+\gamma\chi _C$ is a three-valued random variable where
$\alpha ,\beta ,\gamma\in\real$ are distinct and $A,B,C$ are mutually disjoint. By Theorem~\ref{thm25}(e) we have
\begin{align}         % equation (6.3)
\label{eq63}
\fhat&=(\alpha\chi _A+\beta\chi _B+\gamma\chi _C)^\wedge\notag\\
  &=(\alpha\chi _A+\beta\chi _B)^\wedge +(\alpha\chi _A+\gamma\chi _C)^\wedge
  +(\beta\chi _B+\gamma\chi _C)^\wedge -\alpha\chihat _A-\beta\chihat _B-\gamma\chihat _C
\end{align}
The right side of \eqref{eq63} contains the quantization operators for three two-valued random variables. Letting
$\lambda _1^i,\lambda _2^i$ be the eigenvalues of the $i$th operator with corresponding unit eigenvectors
$v_1^i,v_2^i$, $i=1,2,3$ we have
\begin{align*}
\int fd\mu _\rho&=\rmtr (\rho\fhat\,)\\
  &=\sum _{i=1}^3\sum _{j=1}^2\lambda _j^i\elbows{\rho v_j^i,v _j^i}-\alpha\elbows{\rho\chi _A,\chi _A}
  -\beta\elbows{\chi _B,\chi _B}-\gamma\elbows{\rho\chi _C,\chi _C}
\end{align*}
Continuing by induction we have for $f=\sum _{i=1}^n\alpha _i\chi _{A_i}$ that
\begin{equation*}
\fhat =\sum _{i<j=1}^n(\alpha _i\chi _{A_i}+\alpha _j\chi _{A_j})^\wedge-(n-1)\sum _{i=1}^n\alpha _i\chihat _{A_i}
\end{equation*}
Using a similar notation as before gives
\begin{align*}
\int fd\mu _\rho&=\rmtr (\rho\fhat\,)\\
  &=\sum _{i=1}^{n(n-1)/2}\sum _{j=1}^2\lambda _j^i\elbows{\rho v_j^i,v_j^i}
  -(n-1)\sum _{i=1}^n\alpha _i\elbows{\rho\chi _{A_i}\chi _{A_i}}
\end{align*}

\end{document}